\newcommand{\keywords}[1]{\par\addvspace\baselineskip
\noindent\keywordname\enspace\ignorespaces#1}
\begin{document}

\def  \R {{\mathbb R}}
\def  \N {{\mathbb N}}
\def  \NT {{\rm NT}}

\def  \NTAlg {{\tt NonTermination}}

\title{Non-Termination Sets of Simple Linear Loops }
\author{Liyun Dai \and Bican Xia\thanks{Corresponding author}\\
    }
\institute{LMAM \& School of Mathematical Sciences, Peking University\\
 \email{dailiyun@pku.edu.cn ~ xbc@math.pku.edu.cn}
}
\date{}
\maketitle
\begin{abstract}
A simple linear loop is a simple while loop with linear assignments and linear loop guards. If a simple linear loop has only
two program variables, we give a complete algorithm for computing the set of all the inputs on which the loop
does not terminate. For the case of more program variables, we show that the non-termination set cannot
be described by Tarski formulae in general.

\keywords {Simple linear loop, termination, non-termination set, eigenvalue, Tarski formula}
\end{abstract}

\section{Introduction}\label{Sec:Intro}

Termination of programs is an important property of programs and one of the main research topics in the field of program verification.
It is well known that the following so-called ``uniform halting problem" is
undecidable in general.

\textit{
Using only a finite amount of time, determine whether a given program will always finish running or could execute forever.
}

However, there are some well known techniques for deciding termination of some special kinds of programs.
A popular technique is to use ranking functions. A ranking function for a loop maps the
values of the loop variables to a well-founded domain; further, the values of the map decrease
on each iteration. A linear ranking function is a ranking function that is a linear combination of
the loop variables and constants. Some methods for the synthesis of ranking functions and some
heuristics concerning how to automatically generate linear ranking functions
for linear programs have been proposed, for example, in Col\'{o}n and Sipma \cite{cs01}, Dams et al. \cite{DGG00}
and Podelski and Rybalchenko \cite{PR}. Podelski and Rybalchenko \cite{PR} provided an efficient and
complete synthesis method based on linear programming to construct linear ranking functions. Chen
et al. \cite{Chen} proposed a method to generate nonlinear ranking functions based on semi-algebraic
system solving. The existence of ranking function is only a sufficient condition on the
termination of a program. There are programs, which terminate, but do not have ranking functions.
Another popular technique based on well-orders, presented in Lee et al. \cite{lee}, is size-change
principle. The well-founded data can ensure that there are no
infinitely descents, which guarantees termination of programs.

For linear loops, some other methods based on calculating eigenvectors of matrices have been proposed.
Tiwari \cite{Tiw04} proved that the termination problem of a class
of linear programs (simple loops with linear loop conditions and updates) over the reals is decidable through Jordan form and
eigenvector computation. Braverman \cite{Mark06} proved that it is also
decidable over the integers. Xia et al. \cite{xia10}  considered
the termination problems of simple loops with linear updates and  polynomial
loop conditions, and proved that the termination problem of such loops over the integers
is undecidable. In \cite{xia11}, Xia et al. provided a novel symbolic decision procedure for
termination of simple linear loops, which is as efficient
as the numerical one given in \cite{Tiw04}.

A counter-example to termination is an infinite program execution. In program verification, the search for counter-examples to termination is as important as the search for proofs of termination.
In fact, these are the two folds of termination analysis of programs. Gupta et al. \cite{GuptaHMRX08} proposed a method for searching counter-examples to termination, which first enumerates lasso-shaped candidate paths for counter-examples and proves the feasibility of a given lasso by solving the existence of a recurrent set as a template-based constraint satisfaction problem. Gulwani et al. \cite{Gulwani2008} proposed a constraint-based approach to a wide class of program analyses and weakest precondition and strongest postcondition inference. The approach can be applied to generating most-general counter-examples to termination.

In this paper, we consider the set of all inputs on which a given program does not terminate. The set is called \NT\ throughout the paper. For simple linear loops, we are interested in whether the \NT\ is decidable and how to compute it if it is decidable. Similar problems was also considered in \cite{zhao11}. Our contributions in this paper are as follows. First, for homogeneous linear loops (see Section \ref{sec:pre} for the definition) with only two program variables, we give a complete algorithm for computing the \NT. For the case of more program variables, we show that the \NT\ cannot be described by Tarski formulae in general.

The rest of this paper is organized as follows.
Section \ref{sec:pre} introduces some notations and basic results on simple linear loops. Section \ref{sec:two} presents an algorithm for computing the \NT\ of homogeneous linear loops with only two program variables. The correctness of the algorithm is proved by a series of lemmas. For linear loops with more than two program variables, it is proved in Section \ref{sec:more} that the \NT\ is not a semi-algebraic set in general, i.e., it cannot be described by Tarski formulae in general. The paper is concluded in Section \ref{sec:con}.

\section{Preliminaries}\label{sec:pre}

In this paper, the domain of inputs of programs is $\R$, the field of real numbers.
A {\em simple linear loop} in general form over $\R$ can be formulated as

\[{\tt P1}: \quad {\rm while}\ \left({B\vec{x}>\vec{b}} \right )\ \left\{ {\vec{x}:=A\vec{x}+\vec{c}} \right\}
\]
where $\vec{b},\vec{c}$ are real vectors, $A_{n\times n},B_{m\times n}$ are real matrices. $B\vec{x}>\vec{b}$ is a conjunction of $m$ linear inequalities in $\vec{x}$ and $\vec{x}:=A\vec{x}+\vec{c}$ is a linear assignment on the program variables $\vec{x}$.
\begin{definition}\label{def:NT}{\rm \cite{Tiw04}}
The {\em non-termination set} of a program is the set of all inputs on which the program does not terminate. It is denoted by \NT\ in this paper.
\end{definition}

In particular,
\[{\rm NT}({\tt P1})=\{\vec{x}\in \R^n| {\tt P1}\ {\rm does\ not\ terminate\ on}\ \vec{x}\} \enspace .\]

We list some related results in \cite{Tiw04}.

\begin{proposition}\label{pro:Tiwari}{\rm\cite{Tiw04}}
For a simple linear loop {\tt P1}, the following is true.
\begin{itemize}
\item The termination of {\tt P1} is decidable.
\item If $A$ has no positive eigenvalues,  the \NT\ is empty.
\item The \NT\ is convex.
\end{itemize}
\end{proposition}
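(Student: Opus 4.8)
The plan is to treat the three items separately, observing that convexity is essentially a bookkeeping argument, that the eigenvalue criterion is the analytic heart, and that decidability follows once the eigenvalue analysis is turned into a finite feasibility check. Throughout I would first pass from the affine loop {\tt P1} to a homogeneous one: writing the $k$-th iterate as $\vec{x}_k = A^k\vec{x} + (A^{k-1}+\cdots+A+I)\vec{c}$, non-termination on $\vec{x}$ means $B\vec{x}_k > \vec{b}$ for every $k \ge 0$. When $1$ is not an eigenvalue of $A$ (which is guaranteed under the hypothesis of the second item), the shift $\vec{z}=\vec{x}-\vec{x}^\ast$ by the fixed point $\vec{x}^\ast=(I-A)^{-1}\vec{c}$ turns the recurrence into $\vec{z}_{k+1}=A\vec{z}_k$, so I may analyze $A^k$ acting on a single vector.

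\emph{Convexity.} For each $k$ the map $\vec{x}\mapsto B\vec{x}_k$ is affine in $\vec{x}$, so $S_k=\{\vec{x}\mid B\vec{x}_k>\vec{b}\}$ is an intersection of finitely many open affine half-spaces and hence convex. Since $\NT({\tt P1})=\bigcap_{k\ge 0}S_k$ and an arbitrary intersection of convex sets is convex, $\NT$ is convex. This settles the third item with no appeal to spectral data.

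\emph{No positive eigenvalue implies empty $\NT$.} Here I would work with the homogeneous loop, where the guard region $K=\{\vec{x}\mid B\vec{x}>\vec{0}\}$ is an open convex cone through the origin; note $K\cap(-K)=\emptyset$, so $K$ contains no antipodal pair of directions, and the origin itself fails the guard. Suppose for contradiction that some $\vec{x}\neq\vec{0}$ lies in $\NT$, so all iterates $\vec{x}_k=A^k\vec{x}$ stay in $K$; in particular $\vec{x}_k\neq\vec{0}$ and the normalized directions $\vec{x}_k/|\vec{x}_k|$ stay in $K$. I would decompose $\vec{x}$ along the real Jordan structure of $A$ and isolate the eigenvalues of largest modulus contributing to the trajectory. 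If the dominant contribution comes from a negative real eigenvalue, the leading term of $A^k\vec{x}$ has the form $(-1)^k k^{d}\rho^k\vec{w}$, so consecutive normalized directions tend to antipodal limits $\pm\vec{w}/|\vec{w}|$, contradicting $K\cap(-K)=\emptyset$. If it comes from a non-real conjugate pair $\rho e^{\pm i\theta}$ with $\theta\in(0,\pi)$, the normalized directions trace a rotation by $\theta$ inside a two-plane; such an orbit is not contained in any open half-plane of that plane, a fortiori not in the slice $K$ cuts out. If the only active eigenvalues have modulus $0$, the trajectory reaches $\vec{0}$ and the guard fails. As $A$ has no positive real eigenvalue, one case always applies, so $\NT$ is empty. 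The main obstacle is exactly this step: making the ``dominant direction'' argument rigorous when several eigenvalues share the largest modulus, which requires a careful estimate of $A^k\vec{x}$ from the Jordan form rather than a bare limit.

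\emph{Decidability.} The difficulty is that $\NT({\tt P1})=\bigcap_{k\ge0}S_k$ is an infinite conjunction, which I would convert into a finite criterion driven by the spectrum of $A$ (after homogenizing by adjoining the constant coordinate, so the affine part is absorbed into one larger matrix). The analysis above shows that a non-terminating point can only be sustained by a positive real eigenvalue; conversely, given $\lambda>0$ with (generalized) eigenvector $\vec{v}$, the ray through $\vec{v}$ yields a non-terminating point precisely when $\vec{v}$ satisfies a finite system read off from the guard, namely $B\vec{v}>\vec{0}$ in the strict case and, on the boundary where some inequalities degenerate to equalities, a lower-dimensional problem of the same type on the restricted guard. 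Each such test is a linear feasibility problem, solvable by linear programming or by quantifier elimination over the reals, and the boundary recursion terminates because the ambient dimension strictly decreases. Hence non-termination of {\tt P1}, and therefore its termination, is decidable.
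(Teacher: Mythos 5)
You should first be aware that the paper contains no proof of this proposition to compare against: it is quoted verbatim from Tiwari \cite{Tiw04}, so your proposal has to stand on its own. Your convexity argument is correct and is the standard one (each $S_k$ is the preimage of an open polyhedron under an affine map, and an arbitrary intersection of convex sets is convex). The second item, however, breaks at the reduction step, not merely at the Jordan-form estimate you flag. The fixed-point substitution $\vec{z}=\vec{x}-\vec{x}^\ast$ homogenizes the dynamics but \emph{not} the guard: it becomes $B\vec{z}>\vec{b}-B\vec{x}^\ast$, which is in general not a cone through the origin, so the entire argument based on the open convex cone $K$ with $K\cap(-K)=\emptyset$ does not apply to {\tt P1}. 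This failure is essential, because read literally for affine loops the item is false: for the loop while $(x>-1)\ \{x:=1\}$, i.e. $A=(0)$, $c=(1)$, $B=(1)$, $\vec{b}=(-1)$, the matrix $A$ has no positive eigenvalue, yet \NT\ $=\{x \mid x>-1\}\neq\emptyset$. The eigenvalue criterion in \cite{Tiw04} is a statement about the homogeneous loop {\tt P2} (the only case this paper actually uses); your proof silently restricts to that case in mid-argument while the preamble claims to have reduced {\tt P1} to it.

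Even within the homogeneous case there is a second, unclosed gap. The normalized iterates lie in the open cone $K$, but their limit points lie only in the closure $\overline{K}$, and $\overline{K}$ can perfectly well contain antipodal pairs (e.g. $K=\{(x,y)\mid x>0\}$). So the dominant-negative-eigenvalue case yields only $B\vec{w}\geq 0$ and $B(-\vec{w})\geq 0$, i.e. $B\vec{w}=0$, not a contradiction; likewise the rotation case yields only that $B$ annihilates the dominant invariant plane. In both cases one must then recurse on the subdominant components of the trajectory (consider $A=\mathrm{diag}(-2,1)$, $B=(0,1)$: non-termination coexists with a dominant negative eigenvalue precisely because $B$ kills it), and one must also handle several eigenvalues sharing the maximal modulus. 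That recursion over the spectrum is exactly the inductive heart of Tiwari's proof, and it is the part you explicitly defer. The decidability argument inherits these problems: it invokes ``the analysis above'' for the dichotomy that non-termination can only be sustained by a positive real eigenvalue, and the finite eigenvector criterion together with its dimension-decreasing boundary recursion is asserted rather than proved. As it stands, only the convexity item is actually established.
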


In this paper, only  the following {\em homogeneous case} is considered.
\[{\tt P2}: \quad {\rm while}\ ({B\vec{x}>0})\ \{\vec{x}:=A\vec{x}\} \enspace .
\]
Let $B_1,\ldots, B_m$ be the rows of $B$. Consider the following loops
\[L_i: \quad {\rm while}\ (B_i\vec{x}>0)\ \{\vec{x}:=A\vec{x}\} \enspace .
\]
Obviously, NT({\tt P2})=$\bigcap_{i=1}^{m} {\rm NT}(L_i).$ Therefore, without loss of generality, we assume throughout this paper that $m=1$, {\it i.e.}, there is only one inequality as the loop guard. The following is a simple example of such loops.
\[{\rm while}\ (4x_1+x_2>0)\quad
\left \{\left(\begin{array}{c} x_1 \\ x_2 \end{array} \right):=
\left(\begin{array}{cc}
-2 & 4 \\
4 & 0
\end{array}
\right)
\left(\begin{array}{c} x_1\\ x_2 \end{array} \right)\right\} \enspace .
\]
That is $B=(4,1), A=\left(
\begin{array}{cc}
-2 & 4 \\
4 & 0
\end{array}\right)\enspace .$

\section{Two-variable case}\label{sec:two}

To make things clear, we restate the problem for this two-variable case as follows.

{\it For a given homogeneous linear loop {\tt P2}
with exactly two program variables and only one inequality as the loop guard, 
compute} \NT({\tt P2}).

For simplicity, we denote the program variables by $x_1,x_2$ and use \NT\ instead of \NT({\tt P2}) in this section.
If $\vec{\alpha} $ is a non-zero point in the plane, we  denote by $\overrightarrow{\vec{\alpha}}$ a ray starting from the origin of plane and going through the point $\vec{\alpha}$.

\begin{proposition}\label{prop:2}
{\rm NT}  must be one of the  following:\\
 (1) an empty set;\\
(2) a ray starting from the origin; \\
(3) a sector between two rays starting from the origin.
\end{proposition}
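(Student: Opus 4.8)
The plan is to show that \NT\ is a convex cone in the plane that is contained in an open half-plane, and then to appeal to the elementary geometric classification of such sets. First I would record two structural facts. \emph{Cone property}: for any $\vec{x}\in\NT$ and any real $\lambda>0$ we have $BA^k(\lambda\vec{x})=\lambda\,(BA^k\vec{x})>0$ for every $k\ge 0$, so $\lambda\vec{x}\in\NT$; thus \NT\ is a union of open rays emanating from the origin. \emph{Convexity}: this is exactly the third item of Proposition \ref{pro:Tiwari}, which I may cite directly.

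Next I would localize \NT\ inside an open half-plane. Taking $k=0$ in the defining condition ``$BA^k\vec{x}>0$ for all $k\ge 0$'' shows that every $\vec{x}\in\NT$ satisfies $B\vec{x}>0$, so $\NT\subseteq\{\vec{x}\in\R^2 : B\vec{x}>0\}$. In particular \NT\ contains no pair of antipodal points $\vec{x},-\vec{x}$, since $B\vec{x}>0$ and $B(-\vec{x})>0$ cannot both hold.

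The geometric heart of the argument is the classification of a convex cone $C\subseteq\R^2$ that meets no antipodal pair. I would pass to the trace $S=C\cap\{|\vec{x}|=1\}$ of $C$ on the unit circle; by the cone property $C$ is recovered as the set of positive multiples of points of $S$, so it suffices to describe $S$. Using convexity together with the cone property, for any $\vec{u},\vec{v}\in S$ the whole segment between them lies in $C$, and normalizing its points shows that the entire shorter arc from $\vec{u}$ to $\vec{v}$ lies in $S$; the arc is unambiguous because $\vec{u}$ and $\vec{v}$ are never antipodal. Hence $S$ is an arc of the circle of angular length at most $\pi$, with length exactly $\pi$ forcing at least one endpoint to be excluded. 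Reading off the three possibilities---$S$ empty, $S$ a single point, $S$ an arc with more than one point---yields respectively the empty set, a single ray $\overrightarrow{\vec{\alpha}}$, and a sector between two rays, the open half-plane appearing as the limiting antipodal sector.

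I expect the main obstacle to be the careful treatment of the boundary, i.e.\ deciding which of the two bounding rays actually belong to \NT. Because $\NT=\bigcap_{k\ge 0}\{BA^k\vec{x}>0\}$ is an intersection of infinitely many open half-planes, it need be neither open nor closed, so establishing the geodesic convexity of $S$ and pinning down the inclusion or exclusion of the two extreme rays requires the strict-inequality bookkeeping rather than a soft topological argument.
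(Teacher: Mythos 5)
Your proposal is correct and takes essentially the same route as the paper: the cone property obtained from positive scaling of the guard values $BA^k(\lambda\vec{x})=\lambda BA^k\vec{x}$, combined with the convexity item of Proposition \ref{pro:Tiwari}. The paper simply declares the resulting classification ``clear,'' whereas you make that final step rigorous (the containment in the open half-plane $\{B\vec{x}>0\}$ ruling out antipodal pairs, and the unit-circle trace argument), which is a tightening of the same argument rather than a different approach.
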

\begin{proof}
We view an input $(x_1,x_2)$ as a point in the real plane with origin $O$. If there exists a point $M(x_1,x_2)\in$ NT, any point $\vec{P}$ on the ray $\overrightarrow{\vec{OM}}$  can be written as  $\vec{P}=kM=(kx_1,kx_2)$ for a positive number $k$. So $BA^n(kx_1,kx_2)^T=k^nBA^n(x_1,x_2)^T>0$ for any  $n\in \N$. That means $\vec{P}\in \NT$. Therefore, it is clear from the item 3 of
Proposition \ref{pro:Tiwari} that the conclusion is true.
\end{proof}
By the above proposition, the key point for computing the NT is to compute the ray(s) which is (are) the boundary of NT. We give the following algorithm to compute the ray(s) (and thus the NT) for {\tt P2} if the NT is not empty. The algorithm, as can be expected, is mainly based on the computation of eigenvalues and eigenvectors of $A$. The correctness of our algorithm will be proved by a series of lemmas following the algorithm.
\begin{algorithm}\label{alg:1}
\SetAlgoCaptionSeparator{. }
\caption{\NTAlg}
\DontPrintSemicolon
\KwIn{ Matrices $A_{2\times 2}$ and $B_{1\times 2}$. }
\KwOut{The NT of {\tt P2} with $A$ and $B$. }
\If {$A={\bf 0}$ or $B={\bf 0}$} {return $\emptyset$;}
Compute the eigenvalues of $A$ and denote them by $\lambda_1,\lambda_2$;\;
\If {$\lambda_1\ngtr 0\wedge \lambda_2\ngtr 0$} {return $\emptyset$;~~~~~~~~~~~~ // Proposition \ref{pro:Tiwari}}
Take $\vec{\alpha_{0}}\in \R^{2}\setminus \{0\}$ such that $B\vec{\alpha_{0}}=0$ and $BA\vec{\alpha_{0}}\geq0$;\; 
\If{$BA\vec{\alpha_{0}}=0$} {choose $\vec{\xi}$ such that $B\vec{\xi}>0$\;
\eIf{$B(A\vec{\xi})>0$} {return $\{\vec{x}|\vec{x}\in\R^{2}, B\vec{x}> 0\}$~~~~~~// Lemma \ref{le:4}}
{return $\emptyset$~~~~~~~~ // Lemma \ref{le:5}}
}
\If {$\lambda_1=0\vee \lambda_2=0$} {return $\{\vec{x}|\vec{x}\in\R^{2},B\vec{x}>0, BA\vec{x}>0\}$;~~~~ // Lemma \ref{le:6}}
Suppose $\lambda_1\geq\lambda_2$\;
\If {$\lambda_{1}\geq \lambda_{2}>0$} {choose an eigenvector $\vec{\beta_{2}}$ related to $\lambda_{2}$ such that $B\vec{\beta_{2}}\geq 0$;\;
return $\{\vec{x}|\vec{x}=k_{1}\vec{\alpha_{0}}+k_{2}\vec{\beta_{2}},k_{1}\geq 0,k_{2}>0\}$; ~~~ // Lemmas \ref{le:7} and \ref{le:8}}
\If{$\lambda_{1}>0\wedge\lambda_{2}<0$}
{\If {$\lambda_{1}\geq |\lambda_{2}|$} {let $\vec{\alpha_{-1}}=A^{-1}\vec{\alpha_{0}}$ and
 return $\{\vec{x}|\vec{x}=k_{1}\vec{\alpha_{0}}+k_{2}\vec{\alpha_{-1}},k_1>0,k_2>0\}$;} 
\If{$\lambda_{1}< |\lambda_{2}|$} {choose an eigenvector $\vec{\beta}$ related to $\lambda_{1}$ such that $B\vec{\beta}> 0$ and\; return  $\{\vec{x}|\vec{x}=k\vec{\beta},k>0\}$ ~~~ // Lemma \ref{le:10}}
}
\end{algorithm}
\begin{figure}
\begin{center}
\includegraphics[scale=0.9]{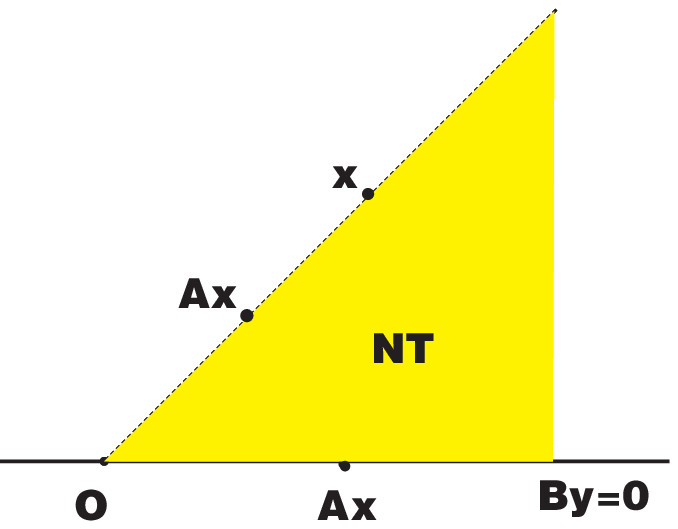}
\caption{Lemma \ref{le:1}}
\end{center}
\label{fig:le1}
\end{figure}
\begin{lemma} \label{le:1}
Suppose \NT\ is not empty and $\partial \NT $ is the boundary of \NT. If $\vec{x} \in \partial \NT $ and $B\vec{x} \neq 0$, then $A\vec{x} \in \partial \NT$.
\end{lemma}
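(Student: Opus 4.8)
The plan is to exploit the ``index shift'' that the map $\vec{x}\mapsto A\vec{x}$ induces on the non-termination conditions, together with the continuity of $A$ and $B$. Recall that $\vec{x}\in\NT$ exactly when $BA^{n}\vec{x}>0$ for all integers $n\ge 0$. The first observation I would record is the invariance $A(\NT)\subseteq\NT$: if $\vec{x}\in\NT$, then the defining conditions for $A\vec{x}$, namely $BA^{n}(A\vec{x})=BA^{n+1}\vec{x}>0$, form exactly the subfamily ($n\ge 1$) of the conditions already known to hold for $\vec{x}$, so $A\vec{x}\in\NT$.

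Next I would pin down the sign of $B\vec{x}$ at a boundary point. Since $\vec{x}\in\partial\NT$, every neighbourhood of $\vec{x}$ meets $\NT$, and every point of $\NT$ satisfies $B(\cdot)>0$ (the case $n=0$). By continuity of $B$ this rules out $B\vec{x}<0$, for a strictly negative value would persist on an entire neighbourhood of $\vec{x}$, which could then contain no point of $\NT$, contradicting $\vec{x}\in\partial\NT$. Hence the hypothesis $B\vec{x}\neq 0$ forces $B\vec{x}>0$. This is the crux of the lemma: it guarantees that, for points sufficiently near $\vec{x}$, the loop guard is satisfied at the very first step.

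With these in hand I would establish $A\vec{x}\in\partial\NT$ by proving separately that $A\vec{x}\in\overline{\NT}$ and $A\vec{x}\notin\mathrm{int}(\NT)$. For the closure membership, continuity together with the invariance gives $A\vec{x}\in A(\overline{\NT})\subseteq\overline{A(\NT)}\subseteq\overline{\NT}$. For the non-interiority, I would choose $\vec{z}_{k}\to\vec{x}$ with $\vec{z}_{k}\notin\NT$, which is possible because $\vec{x}\notin\mathrm{int}(\NT)$. Since $B\vec{x}>0$, continuity yields $B\vec{z}_{k}>0$ for all large $k$, so the first index at which $\vec{z}_{k}$ violates the guard is some $n_{k}\ge 1$, i.e.\ $BA^{n_{k}}\vec{z}_{k}\le 0$. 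Rewriting, $BA^{\,n_{k}-1}(A\vec{z}_{k})\le 0$ with $n_{k}-1\ge 0$, so $A\vec{z}_{k}\notin\NT$, while $A\vec{z}_{k}\to A\vec{x}$. Thus $A\vec{x}$ is a limit of points outside $\NT$ and cannot be interior, and combining the two halves gives $A\vec{x}\in\overline{\NT}\setminus\mathrm{int}(\NT)=\partial\NT$.

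I expect the main obstacle to be precisely the non-interiority step, specifically the need to guarantee that the \emph{first} guard violation for the nearby exterior points $\vec{z}_{k}$ occurs at a positive index. Were this first violation allowed at step $0$ (that is, $B\vec{z}_{k}\le 0$), then $A\vec{z}_{k}$ could re-enter $\NT$ and the argument would break down; this is exactly where the hypothesis $B\vec{x}\neq 0$, sharpened to $B\vec{x}>0$, does the essential work. I would also note that the argument nowhere uses invertibility of $A$, so it applies uniformly, including when one eigenvalue of $A$ vanishes.
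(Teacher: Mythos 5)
Your proof is correct and takes essentially the same route as the paper's: both hinge on sharpening $B\vec{x}\neq 0$ to $B\vec{x}>0$ via the boundary hypothesis, the invariance $A(\NT)\subseteq\NT$, continuity of $A$, and the observation that a point satisfying the guard but lying outside $\NT$ maps under $A$ to a point outside $\NT$. If anything, your write-up is slightly more careful: the index-shift justification for that last step (the first guard violation of a nearby exterior point must occur at index $n_k\ge 1$, so $A\vec{z}_k$ violates at index $n_k-1\ge 0$) is exactly what the paper's phrase ``$A(\vec{z})\notin{\rm NT}$'' tacitly relies on but does not spell out.
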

\begin{proof}
Obviously, $B$ is a linear map from $\R^2$ to $\R$ . Because $B\vec{y}>0$ for all $\vec{y} \in \NT$, we have $B\vec{x}\geq 0$. And thus $B\vec{x}>0$ by the
assumption that $B\vec{x}\neq 0$. Hence, there exists an  open ball $o_1(\vec{x},r_1)$ such that $B\vec{y}>0$ for all
$\vec{y}\in o_1(\vec{x},r_1).$

Let $F$ be the linear map from $\R^2$ to $\R^2$ that $F(\vec{y})=A\vec{y}$ for any  $\vec{y} \in \R^2$ and hence $F$ is continuous. So for any neighborhood $o(A\vec{x},r)$ of $A\vec{x}$, there exists a positive real number $r_2$  such that $o_2(\vec{x},r_2)\subseteq o_1(\vec{x},r_1)$ and $F(o_2(\vec{x},r_2))\subseteq o(A\vec{x},r).$ Because $\vec{x} \in \partial \NT$, there exist $\vec{y},\vec{z}\in o_2(\vec{x},r_2)$ such that $\vec{y}\in {\rm NT}$ and $\vec{z}\notin {\rm NT}$. Then $A(\vec{y})$, $A(\vec{z})\in o(A\vec{x},r)$, $A\vec{(y})\in {\rm NT}$ and $ A(\vec{z})\notin {\rm NT}$. It is followed that  there are both terminating and non-terminating inputs in any neighborhood of $A\vec{x}$. Therefore, $A\vec{x} \in \partial \NT$.
\end{proof}

\begin{figure}
\begin{center}
\includegraphics[scale=0.8]{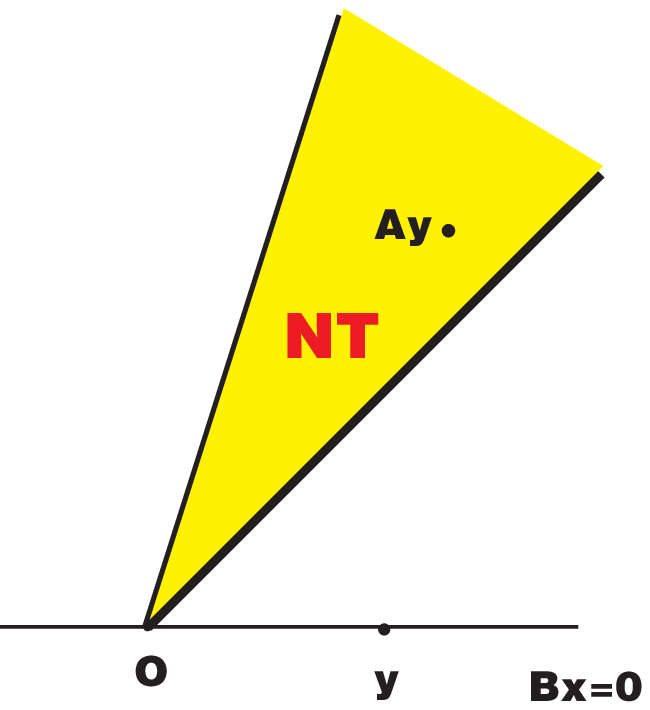}
\caption{Lemma \ref{le:2}}
\end{center}
\label{fig:le2}
\end{figure}
\begin{lemma}\label{le:2} Suppose \NT\ is neither empty nor a ray and  $\partial \NT\ \cap \{\vec{x}|B\vec{x}=0\}=\{(0,0)\}$. If $B\vec{y}=0$ and $BA\vec{y}>0$, then $A\vec{y} \in  \NT$.
\end{lemma}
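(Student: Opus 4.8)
The plan is to argue by contradiction, combining the shape of \NT\ given by Proposition \ref{prop:2} with the boundary--propagation supplied by Lemma \ref{le:1}. By Proposition \ref{prop:2}, \NT\ is a sector bounded by two rays from the origin, and the hypothesis $\partial\NT\cap\{\vec x\mid B\vec x=0\}=\{(0,0)\}$ forces both bounding rays to lie in the open half--plane $\{B\vec x>0\}$; consequently $\overline{\NT}\setminus\{0\}\subseteq\{B\vec x>0\}$, so the nonzero point $\vec y$, which satisfies $B\vec y=0$, lies outside $\overline{\NT}$. I would also first record that $A$ must be invertible here: a zero eigenvalue would make the guard $B\vec x>0$ itself contribute a bounding ray lying on $\{B\vec x=0\}$ (compare the $\lambda=0$ branch of Algorithm \ref{alg:1}), contradicting the hypothesis.

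Assume, for contradiction, that $A\vec y\notin\NT$. I would connect $\vec y$ to \NT\ by a straight segment: fix any $\vec q\in\NT$ and put $\vec y_t=(1-t)\vec y+t\vec q$. Since $B\vec y_t=t\,B\vec q$, every $\vec y_t$ with $t>0$ obeys the guard, and because \NT\ is convex (Proposition \ref{pro:Tiwari}) with $\vec y\notin\overline{\NT}$ but $\vec q\in\NT$, the segment has a well-defined entry parameter $t^\ast\in(0,1)$ at which $\vec y_{t^\ast}\in\partial\NT$ and $B\vec y_{t^\ast}=t^\ast B\vec q>0$. The goal is then to follow the orbit of this boundary point until it returns to $\{B\vec x=0\}$: if $k^\ast\ge 1$ is the least index with $BA^{k^\ast}\vec y_{t^\ast}=0$, then $BA^{j}\vec y_{t^\ast}>0$ for $0\le j<k^\ast$, so applying Lemma \ref{le:1} successively to $\vec y_{t^\ast},A\vec y_{t^\ast},\dots,A^{k^\ast-1}\vec y_{t^\ast}$ (each with nonzero $B$-value) gives $A^{k^\ast}\vec y_{t^\ast}\in\partial\NT$. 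Since $BA^{k^\ast}\vec y_{t^\ast}=0$ and $A^{k^\ast}\vec y_{t^\ast}\neq 0$ (invertibility of $A$), this produces a nonzero point of $\partial\NT\cap\{B\vec x=0\}$, contradicting the hypothesis and forcing $A\vec y\in\NT$. The role of $BA\vec y>0$ is to guarantee that $A\vec y$ already lies in the open guard region, so that the only possible obstruction to $A\vec y\in\NT$ is a later sign change of the orbit, which is exactly what the contradiction rules out.

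The step I expect to be the main obstacle is the existence of the binding index $k^\ast$, i.e. that the orbit of the entry point $\vec y_{t^\ast}$ genuinely returns to $\{B\vec x=0\}$ rather than staying in $\{B\vec x>0\}$ for all iterations. This can fail a priori on a bounding ray that is a positive eigendirection $\vec v$ of $A$, where $BA^n\vec v=\lambda^n B\vec v>0$ for every $n$ and \NT\ in fact contains that ray; on such a ray no $k^\ast$ exists and the construction stalls. Ruling this out is precisely where the full force of the hypotheses must enter. By Lemma \ref{le:1}, $A$ either fixes each bounding ray or interchanges the two, and I would split on these cases, analysing the induced eigenstructure (two positive eigenvalues in the ``fixed'' case, a positive/negative pair of equal modulus in the ``swap'' case). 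In each case one has to show that, once $\partial\NT$ avoids $\{B\vec x=0\}$, the ray through which the segment enters either carries a binding index $k^\ast$ or else coincides with a bounding ray lying on $\{B\vec x=0\}$; both alternatives terminate in the forbidden intersection point, so the hypotheses appear to be very restrictive and the contradiction closes in every sub-case.
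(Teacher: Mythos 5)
Your argument has a genuine gap, and it is exactly where you say it is. The engine of your proof --- drive the entry point $\vec{y}_{t^\ast}\in\partial\NT$ forward with Lemma \ref{le:1} until its orbit lands on $\{B\vec{x}=0\}$, then contradict the hypothesis --- only closes in your case (i). Your case (ii), where $BA^j\vec{y}_{t^\ast}>0$ for every $j$ and no binding index $k^\ast$ exists, is not a degenerate nuisance: it is what actually happens whenever a bounding ray of $\NT$ is a positive eigendirection of $A$, which is the generic situation (compare Lemmas \ref{le:7}--\ref{le:10}). Your treatment of that case consists of deferred obligations (``I would split\dots'', ``one has to show\dots'', ``appear to be''), and discharging them requires computing $\NT$ explicitly from the eigenstructure of $A$ in each sub-case and showing its boundary meets $\{B\vec{x}=0\}$ --- that is, it requires essentially the entire substance of the paper's case analysis, which your sketch does not supply. (A smaller instance of the same problem: your invertibility claim, and your dichotomy ``$A$ fixes each bounding ray or interchanges the two,'' which silently drops the two cases where both rays map into a single ray --- the paper's cases (2) and (3) --- are likewise only gestured at.)

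There is also a structural defect that explains why you ran into this wall: your contradiction hypothesis $A\vec{y}\notin\NT$ is never used. The segment construction and the orbit argument involve only $\vec{y}$, $\vec{q}$ and the lemma's hypotheses, so what you are really attempting is to prove that those hypotheses are themselves inconsistent. That statement is true, but in the paper it is the content of Lemma \ref{le:3}, whose proof \emph{uses} Lemma \ref{le:2}; proving inconsistency directly is strictly harder than the lemma itself. The paper's proof avoids this trap: it writes $\vec{y}=t_1\vec{z}+t_2\vec{v}$ with $\vec{z},\vec{v}$ on the two bounding rays, uses Lemma \ref{le:1} to reduce to four cases for where $A\vec{z}$ and $A\vec{v}$ land, and in the first three cases verifies the conclusion $A\vec{y}\in\NT$ by a direct computation of $BA^n(A\vec{y})$, without ever deciding whether those cases can really occur; only the swap case is killed by contradiction. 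That willingness to verify the (possibly vacuous) conclusion case by case, rather than to refute every case, is precisely what makes the paper's proof terminate where yours stalls.
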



\begin{proof} By Proposition \ref{prop:2}, $\partial \NT$ consists of two
rays. Let $l_1,l_2$ be the two rays. Since
neither $l_1$ nor $l_2$ is on $Bx=0$, 
$l_1$ and $l_2$  are not collinear.
 So we can choose two  points $\vec{z}\in l_1$ and $\vec{v}\in l_2$ such that
 $B\vec{z}>0$, $B\vec{v}>0$ and $\vec{y}=t_1\vec{z}+t_2\vec{v}$ for some $t_1\in
\R,t_2\in \R$. By Lemma \ref{le:1}, $A\vec{z}$ and $A\vec{v}$ must be on the
boundary of \NT, i.e., $l_1$ or $l_2$. Thus, we have at most four
possible cases as follows.
\begin{itemize}
  \item[(1)]$A\vec{z}=k_1\vec{z},A\vec{v}=k_2\vec{v},$ (i.e., $A\vec{z}\in l_1, A\vec{v}\in l_2$)
  \item[(2)]$A\vec{z}=k_1\vec{z},A\vec{v}=k_2\vec{z},$ (i.e., $A\vec{z}\in l_1, A\vec{v}\in l_1$)
   \item[(3)]$A\vec{z}=k_1\vec{v},A\vec{v}=k_2\vec{v},$ (i.e., $A\vec{z}\in l_2, A\vec{v}\in l_2$)
    \item[(4)]$A\vec{z}=k_1\vec{v},A\vec{v}=k_2\vec{z},$ (i.e., $A\vec{z}\in l_2, A\vec{v}\in l_1$)
\end{itemize}
where $k_1>0,k_2>0$.

Case (1). Because $B\vec{y}=t_1B\vec{z}+t_2B\vec{v}=0$ and
\[BA\vec{y}=BA(t_1\vec{z}+t_2\vec{v})=t_1k_1B\vec{z}+t_2k_2B\vec{v}>0,\] we have $t_1t_2<0$.
Without loss of generality,  assume that $t_1>0$ and $t_2<0$. We denote
$t_1B\vec{z}$ by $P$. Note that $P>0$ and $t_2B\vec{v}=-P$. Since
$BA\vec{y}=(k_1-k_2)P>0$, we have $k_1>k_2>0$ and
\[BA^n(A\vec{y})=k_1^{n+1}t_1B\vec{z}+k_2^{n+1}t_2B\vec{v}=k_1^{n+1}P-k_2^{n+1}P>0\]
for any $n \in \mathbb{N}$. By the definition of $\NT$, $A\vec{y} \in \NT$.

Case (2). Because $BA\vec{y}=(t_1k_1+t_2k_2)B\vec{z}>0$, we have
\[BA^n(A\vec{y})=k_1^n(t_1k_1+t_2k_2)B\vec{z}>0\] for any $n \in \mathbb{N}.$
By the definition of \NT, we have $A\vec{y} \in \NT$.

Case (3). Similarly  as Case (2), we can prove $A\vec{y} \in
\NT$.

Case (4). We shall show that this case cannot happen. Let
\[S=\{\vec{x}|\vec{x}=r_1\vec{y}+r_2A\vec{y},r_1>0,r_2>0\}\] be the sector between the two
rays $\overrightarrow{\vec{y}}$ and $\overrightarrow{\vec{Ay}}$. For any
$\vec{w} \in S$, we have $B\vec{w}=r_1B\vec{y}+r_2BA\vec{y}=r_2BA\vec{y}>0$.

Because
\[A^2\vec{y}=A(t_1k_1\vec{v}+t_2k_2\vec{z})=t_1k_1k_2\vec{z}+t_2k_1k_2\vec{v}=k_1k_2\vec{y},\]
 we have $A\vec{w}=r_1A\vec{y}+r_2A^2\vec{y}=r_1A\vec{y}+ r_2k_1k_2\vec{y}\in S$. Therefore, $\vec{w} \in \NT$ and
$S \subseteq \NT$. As $\overrightarrow{\vec{y}}$ is a boundary of
$S$ and  $B\vec{y}=0$,
$\overrightarrow{\vec{y}}$ is contained in $\partial \NT$, which
contradicts with the assumption of the lemma. So (4) cannot
happen.

In summary, $A\vec{y} \in \NT$.
\end{proof}
\begin{figure}
\begin{center}
\includegraphics[scale=0.9]{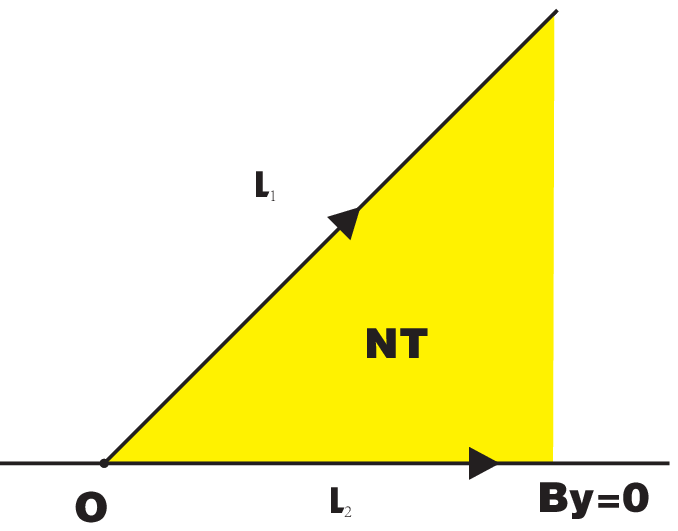}
\caption{Lemma \ref{le:3}}
\end{center}
\label{fig:le:3}
\end{figure}

\begin{lemma}\label{le:3}  If $\partial \NT$ is composed of two rays $l_1$ and $l_2$, then either $l_1$ or $l_2$ is on $B\vec{x}=0$.
\end{lemma}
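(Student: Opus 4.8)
The plan is to argue by contradiction and reduce everything to Lemma~\ref{le:2}. Suppose $\partial\NT=l_1\cup l_2$ consists of two rays from the origin but, contrary to the claim, neither $l_1$ nor $l_2$ lies on the line $\ell_0:=\{\vec{x}\mid B\vec{x}=0\}$. Since every point of $\NT$ satisfies $B\vec{x}>0$, boundary points satisfy $B\vec{x}\ge 0$; a ray from the origin that is not contained in $\ell_0$ therefore lies entirely in the open half-plane $B\vec{x}>0$. Hence the contradiction hypothesis is exactly $\partial\NT\cap\ell_0=\{(0,0)\}$, the standing assumption of Lemma~\ref{le:2}. Moreover, the hypothesis of the present lemma puts $\NT$ in the sector case of Proposition~\ref{prop:2}, so $\NT$ is neither empty nor a ray, and all preconditions of Lemma~\ref{le:2} are in force. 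My goal will then be to exhibit a point of $\partial\NT$ lying on $\ell_0$, which is the desired contradiction.

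First I would analyse how $A$ acts on $\ell_0$, which is one-dimensional because $B\neq 0$ (if $B=0$ the guard never holds and $\NT=\emptyset$). Fix $\vec{y_0}\neq 0$ with $B\vec{y_0}=0$; every point of $\ell_0$ is a multiple of $\vec{y_0}$ and $BA(t\vec{y_0})=t\,BA\vec{y_0}$, so the argument splits according to whether $BA\vec{y_0}=0$. In the degenerate case $BA\vec{y_0}=0$, the linear functional $BA$ vanishes on $\ker B$, so $BA=cB$ for some scalar $c$ and $BA^n=c^nB$ for all $n$. I can then read off $\NT$ directly: if $c\le 0$ the guard fails on every input after one (or an odd number of) iterations, whence $\NT=\emptyset$, contradicting that $\NT$ is a sector; if $c>0$ then $\NT=\{\vec{x}\mid B\vec{x}>0\}$ is the whole open half-plane, whose boundary is $\ell_0$ itself, so both boundary rays lie on $B\vec{x}=0$, contradicting the standing assumption.

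In the main case $BA\vec{y_0}\neq 0$, replacing $\vec{y_0}$ by $\pm\vec{y_0}$ lets me choose $\vec{y}\in\ell_0$ with $B\vec{y}=0$ and $BA\vec{y}>0$. Lemma~\ref{le:2} then gives $A\vec{y}\in\NT$, so by the definition of $\NT$ we get $BA^{k}\vec{y}=BA^{k-1}(A\vec{y})>0$ for every $k\ge 1$, while $B\vec{y}=0$; also $\vec{y}$ and $A\vec{y}$ are independent, since $B$ annihilates every multiple of $\vec{y}$ but $BA\vec{y}>0$. I then form the open sector $S=\{r_1\vec{y}+r_2A\vec{y}\mid r_1>0,\,r_2>0\}$. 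For $\vec{w}=r_1\vec{y}+r_2A\vec{y}\in S$ and $n\in\N$ one has $BA^n\vec{w}=r_1\,BA^n\vec{y}+r_2\,BA^{n+1}\vec{y}$, which equals $r_2\,BA\vec{y}>0$ when $n=0$ and is a positive combination of the positive numbers $BA^n\vec{y},BA^{n+1}\vec{y}$ when $n\ge 1$; hence $\vec{w}\in\NT$ and $S\subseteq\NT$. Since $\overrightarrow{\vec{y}}$ bounds $S$, it lies in $\overline{\NT}$, and as $B\vec{y}=0$ it is not in $\NT$, so $\overrightarrow{\vec{y}}\subseteq\partial\NT$. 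But $\overrightarrow{\vec{y}}\subseteq\ell_0$, contradicting $\partial\NT\cap\ell_0=\{(0,0)\}$. This last step is precisely the mechanism used in Case~(4) of the proof of Lemma~\ref{le:2}.

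Every case being contradictory, one of $l_1,l_2$ must lie on $B\vec{x}=0$. The only real obstacle I anticipate is the degenerate case $BA=cB$, in which $A$ leaves $\ell_0$ invariant: this configuration escapes Lemma~\ref{le:2} entirely and must instead be settled by the explicit computation of $\NT$ from $BA^n=c^nB$. The rest is routine once one observes that $A\vec{y}\in\NT$ forces $BA^k\vec{y}>0$ for all $k\ge 1$, which is what makes the sector $S$ entirely non-terminating.
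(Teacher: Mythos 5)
Your proof is correct and follows essentially the same route as the paper: assume neither ray lies on $B\vec{x}=0$, split on whether $BA$ vanishes on $\ker B$, and in the main case invoke Lemma~\ref{le:2} to get $A\vec{y}\in\NT$ and build the non-terminating sector $S=\{r_1\vec{y}+r_2A\vec{y}\mid r_1>0,r_2>0\}$, forcing $\overrightarrow{\vec{y}}\subseteq\partial\NT$ on $B\vec{x}=0$, a contradiction. The only cosmetic difference is the degenerate case, where you derive $BA=cB$ and compute $\NT$ directly (empty or the full half-plane), while the paper reaches the same half-plane contradiction by decomposing $A\vec{y}$ against a point $\vec{z}\in\NT$.
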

\begin{proof}

Assume neither $l_1$ nor $l_2$ is   on $B\vec{x}=0$. Choose a  point $\vec{y}$ such that  $\vec{y}\neq \bf{0}$ , $ B\vec{y}=0 $ and  $BA\vec{y}\geq 0$.

Suppose $BA\vec{y}=0$. As $\NT$ is not empty,  there exists  $\vec{z} \in \NT$.  Hence $A\vec{y}$ can be rewritten  as $A\vec{y}=h_1\vec{z}+h_2\vec{y}$ for some $h_1\in \R,h_2\in \R$. As a result of $BA\vec{y}=h_1B\vec{z}+h_2B\vec{y}=h_1B\vec{z}=0$, $h_1=0$.
Note that \begin{equation}\label{eq:1}
A^n\vec{y}=h_2^n\vec{y} , BA^n\vec{y}=h_2^nB\vec{y}=0 \enspace .
\end{equation}

 According to Eq.(\ref{eq:1}) and $\vec{z} \in \NT $, we have  $BA^n(k_1\vec{z}+k_2\vec{y})=k_1BA^n\vec{z}+k_2BA^n\vec{y}=k_1BA^n\vec{z}>0$   for any   $k_1>0 ,n \in \mathbb{N}$.
Hence  $\{\vec{x}|\vec{x}=k_1\vec{z}+k_2\vec{y},k_1>0\} \subseteq \NT$. Therefore, $ \{\vec{x}|B\vec{x}=0\} =\partial \NT$, which
contradicts with the assumption.

 If $BA\vec{y}>0$,  $A\vec{y} \in \NT$ follows from  Lemma \ref{le:2}. Let $S=\{\vec{x}|k_1\vec{y}+k_2A\vec{y},k_1>0,k_2>0\}$. And we have $ BA^n\vec{z}=k_1BA^ny+k_2BA^{n+1}\vec{y}>0$  for any $n \in \mathbb{N}$,  $\vec{z} \in S$. Thus $\vec{z} \in \NT$ and  $S \subseteq \NT$. By the method of choosing $\vec{y}$, $\overrightarrow{\vec{y}} \subseteq \partial \NT$. That means $\overrightarrow{\vec{y}}$ is $l_1$ or $l_2$, which
contradicts with the assumption.
\end{proof}

\begin{lemma} \label{le:4}

Suppose $A$ has  positive eigenvalues and has an eigenvector $\vec{\alpha}$ satisfying  $B\vec{\alpha}=0$. If $\vec{\xi}$ is a vector such that $B\vec{\xi}>0$ and $BA\vec{\xi}>0$, then $\NT=\{\vec{x}|B\vec{x}>0\}$.
\end{lemma}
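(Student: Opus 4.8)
The plan is to show that the row vector $B$ is a \emph{left} eigenvector of $A$, which collapses the entire non-termination condition $BA^n\vec x>0\ (\forall n)$ into the single guard inequality $B\vec x>0$.

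First I would extract the structural consequence of the two hypotheses on $\vec\alpha$. Since $B\neq\mathbf{0}$ (we are past the first line of the algorithm), the kernel $\{\vec x\mid B\vec x=0\}$ is one-dimensional, and the eigenvector $\vec\alpha$ with $B\vec\alpha=0$ spans it. Because $\vec\alpha$ is an eigenvector, $A\vec\alpha=\lambda\vec\alpha$ for some $\lambda$, whence $BA\vec\alpha=\lambda B\vec\alpha=0$; thus \emph{both} covectors $B$ and $BA$ annihilate the fixed nonzero vector $\vec\alpha$. The space of row covectors killing $\vec\alpha$ has dimension $2-1=1$ and is spanned by $B$, so there must exist a scalar $\mu$ with $BA=\mu B$. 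Note that the value of $\lambda$ plays no role, since $B\vec\alpha=0$ makes $BA\vec\alpha$ vanish regardless.

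Next I would fix the sign of $\mu$ using $\vec\xi$. Evaluating $BA=\mu B$ at $\vec\xi$ gives $BA\vec\xi=\mu\,B\vec\xi$, and the hypotheses $BA\vec\xi>0$, $B\vec\xi>0$ force $\mu=BA\vec\xi/B\vec\xi>0$. (In fact $\mu$ is precisely the positive eigenvalue of $A$ for which $B$ is a left eigenvector, matching the assumption that $A$ has positive eigenvalues.) A one-line induction then upgrades $BA=\mu B$ to $BA^n=\mu^n B$ for every $n\in\N$.

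Finally I would assemble the two inclusions. For any input with $B\vec x>0$ we get $BA^n\vec x=\mu^n B\vec x>0$ for all $n$, since $\mu>0$; hence $\vec x\in\NT$, giving $\{\vec x\mid B\vec x>0\}\subseteq\NT$. The reverse inclusion is immediate, as any $\vec x\in\NT$ must pass the guard already at step $0$, i.e.\ $B\vec x=BA^0\vec x>0$. Combining the two yields $\NT=\{\vec x\mid B\vec x>0\}$. The only genuine content is the covector identity $BA=\mu B$; everything afterward is a trivial computation, so I do not anticipate a real obstacle — the hypothesis that $\vec\alpha$ is an eigenvector killed by $B$ is exactly what makes $B$ a left eigenvector, and $\vec\xi$ serves solely to certify $\mu>0$.
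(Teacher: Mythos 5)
Your proof is correct, and it is essentially the dual formulation of the paper's argument. The paper works with vectors: any $\vec{y}$ in the half-plane $\{\vec{x}\mid B\vec{x}>0\}$ is decomposed as $\vec{y}=k_1\vec{\xi}+k_2\vec{\alpha}$; from $B\vec{y}=k_1B\vec{\xi}>0$ one gets $k_1>0$, hence $BA\vec{y}=k_1BA\vec{\xi}+k_2BA\vec{\alpha}=k_1BA\vec{\xi}>0$ (using $BA\vec{\alpha}=\lambda B\vec{\alpha}=0$, the same fact you use), so the half-plane is $A$-invariant and therefore, by the definition of $\NT$, contained in $\NT$. You work instead with covectors: $B$ and $BA$ both annihilate $\vec{\alpha}$, the annihilator of $\vec{\alpha}$ is one-dimensional and spanned by $B$, hence $BA=\mu B$, and $\vec{\xi}$ certifies $\mu>0$, giving the closed form $BA^n=\mu^n B$. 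The two proofs consume the hypotheses identically; yours yields a stronger intermediate statement --- $B$ is a left eigenvector of $A$ with positive eigenvalue, so the guard value simply scales by $\mu$ at each iteration --- at the small cost of a dual-space dimension count, while the paper's pointwise invariance argument never needs an explicit formula for $BA^n$. Both handle the reverse inclusion $\NT\subseteq\{\vec{x}\mid B\vec{x}>0\}$ the same way (the guard must hold at step $0$), and, as you observe, neither argument uses the eigenvalue $\lambda$ of $\vec{\alpha}$ itself: the positivity of $\mu$ comes entirely from $\vec{\xi}$, so the standing assumption that $A$ has positive eigenvalues is not separately invoked.
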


\begin{proof}

For any $ \vec{y}\in \{\vec{x}|B\vec{x}> 0\}$, it can be written as  $ \vec{y}=k_{1}\vec{\xi}+k_{2}\vec{\alpha}$ for some $k_1\in \R,k_2 \in \R$. As $B\vec{y}=k_{1}B\vec{\xi}+k_{2}B\vec{\alpha}=k_{1}B\vec{\xi}>0$, we have  $k_{1}>0$. Thus $BA\vec{y}=k_{1}BA\vec{\xi}+k_{2}BA\vec{\alpha}=k
_{1}BA\vec{\xi}>0$ and $ A\vec{y}\in \{\vec{x}|B\vec{x}> 0\}$. By the definition of $\NT$, we have  $\{\vec{x}|B\vec{x}> 0\}\subseteq \NT$ and hence   $\NT=\{\vec{x}|B\vec{x}> 0\}$.
\end{proof}

\begin{lemma}\label{le:5}
Suppose $A$ has positive eigenvalues and has an eigenvector $\vec{\alpha}$ satisfying $B\vec{\alpha}=0$. If there is a vector $\vec{\xi}$ such that $B\vec{\xi}>0$ and $ BA\vec{\xi}\leq 0$, then $\NT=\emptyset$.
\end{lemma}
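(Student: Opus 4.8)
The plan is to mirror the structure of the proof of Lemma~\ref{le:4}, but to extract the opposite conclusion from the reversed hypothesis $BA\vec{\xi}\leq 0$. The whole argument rests on expressing an arbitrary input in the basis $\{\vec{\xi},\vec{\alpha}\}$ and showing the guard is violated after a single update.

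First I would check that $\vec{\alpha}$ and $\vec{\xi}$ form a basis of $\R^2$. Since $\vec{\alpha}$ is a nonzero eigenvector lying on the line $B\vec{x}=0$, while $B\vec{\xi}>0$ places $\vec{\xi}$ off that line, neither vector is a scalar multiple of the other, so they are linearly independent. Hence every input $\vec{y}$ can be written uniquely as $\vec{y}=k_{1}\vec{\xi}+k_{2}\vec{\alpha}$ for some $k_{1},k_{2}\in\R$. The only inputs that could possibly lie in $\NT$ are those satisfying the loop guard, i.e. $B\vec{y}>0$; for such $\vec{y}$, from $B\vec{y}=k_{1}B\vec{\xi}+k_{2}B\vec{\alpha}=k_{1}B\vec{\xi}$ together with $B\vec{\xi}>0$ I would conclude $k_{1}>0$.

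Next comes the key computation, which uses that $\vec{\alpha}$ is an eigenvector. Writing $A\vec{\alpha}=\lambda\vec{\alpha}$ gives $BA\vec{\alpha}=\lambda B\vec{\alpha}=0$, regardless of the sign of $\lambda$, so the $\vec{\alpha}$-component drops out and
\[
BA\vec{y}=k_{1}BA\vec{\xi}+k_{2}BA\vec{\alpha}=k_{1}BA\vec{\xi}\leq 0,
\]
because $k_{1}>0$ and $BA\vec{\xi}\leq 0$. Therefore for every input $\vec{y}$ with $B\vec{y}>0$ the very first update $\vec{x}:=A\vec{y}$ already yields $B(A\vec{y})\leq 0$, so the loop halts after at most one iteration; inputs with $B\vec{y}\leq 0$ halt before any iteration. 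In either case the loop terminates, and hence no input is non-terminating, giving $\NT=\emptyset$.

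There is essentially no obstacle to overcome here: the entire content is the observation that the eigenvector hypothesis is precisely what forces $BA\vec{\alpha}=0$, which in turn makes $BA\vec{y}$ proportional (with positive factor $k_1$) to $BA\vec{\xi}$ for every guard-satisfying input. I would also remark that the positive-eigenvalue hypothesis is inherited from the algorithmic context (it rules out the trivially empty case handled earlier via Proposition~\ref{pro:Tiwari}) but is not actually invoked in the argument, since $BA\vec{\alpha}=0$ holds whatever the eigenvalue of $\vec{\alpha}$ may be.
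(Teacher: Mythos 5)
Your proof is correct and follows essentially the same route as the paper's: decompose any guard-satisfying input over the basis $\{\vec{\xi},\vec{\alpha}\}$, use $B\vec{\alpha}=0$ and $BA\vec{\alpha}=\lambda B\vec{\alpha}=0$ to get $BA\vec{y}=k_1BA\vec{\xi}\leq 0$, and conclude the loop halts after at most one iteration. The paper's version is just terser, leaving the linear independence and the cancellation $BA\vec{\alpha}=0$ implicit, whereas you spell them out.
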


\begin{proof}
For any $ \vec{y}\in\{\vec{x}|B\vec{x}>0\},$ it  can be written as $ \vec{y}=k_{1}\vec{\alpha}+k_{2}\vec{\xi}$ for some $k_1\in \R ,k_2\in \R$. Since $B\vec{y}=k_2B\vec{\xi}>0$, we have $k_{2}>0$. And because $BA\vec{y}=k_{2}BA\vec{\xi}\leq 0$, $\NT=\emptyset$.
\end{proof}

\begin{lemma}\label{le:6}
Suppose $A$ has a positive  eigenvalue and a zero eigenvalue. If $\vec{\gamma}$ is an eigenvector related to the positive eigenvalue such that $B\vec{\gamma }> 0$, then $\NT=\{\vec{x}|B\vec{x}>0, BA\vec{x}>0\}.$
\end{lemma}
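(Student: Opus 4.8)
The plan is to prove the two inclusions separately, with essentially all the work going into the containment $\{\vec{x}\mid B\vec{x}>0,\,BA\vec{x}>0\}\subseteq\NT$.

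The inclusion $\NT\subseteq\{\vec{x}\mid B\vec{x}>0,\,BA\vec{x}>0\}$ is immediate from the definition of $\NT$: if the loop does not terminate on $\vec{x}$, then $BA^n\vec{x}>0$ for every $n\in\N$, and taking $n=0$ and $n=1$ yields $B\vec{x}>0$ and $BA\vec{x}>0$. So this direction needs no computation at all.

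For the reverse inclusion I would first fix an eigenbasis. Since $A$ is $2\times 2$ with the two distinct eigenvalues $\lambda>0$ and $0$, it is diagonalizable; let $\vec{\gamma}$ be the given eigenvector for $\lambda$ (so that $B\vec{\gamma}>0$), and let $\vec{\delta}$ be a nonzero eigenvector for the eigenvalue $0$, i.e. $A\vec{\delta}=\vec{0}$. As $\vec{\gamma}$ and $\vec{\delta}$ correspond to distinct eigenvalues, they are linearly independent and hence form a basis of $\R^2$. Now take any $\vec{x}$ with $B\vec{x}>0$ and $BA\vec{x}>0$, and write $\vec{x}=c_1\vec{\gamma}+c_2\vec{\delta}$. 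The key observation is that the zero eigenvalue annihilates the $\vec{\delta}$-component after a single step: for every $n\ge 1$ one has $A^n\vec{x}=c_1\lambda^n\vec{\gamma}$, hence $BA^n\vec{x}=c_1\lambda^n B\vec{\gamma}$. Specializing to $n=1$ and using $BA\vec{x}>0$ together with $\lambda>0$ and $B\vec{\gamma}>0$ forces $c_1>0$. Consequently $BA^n\vec{x}=c_1\lambda^n B\vec{\gamma}>0$ for all $n\ge 1$, while the hypothesis $B\vec{x}>0$ covers $n=0$; therefore $BA^n\vec{x}>0$ for every $n\in\N$, so $\vec{x}\in\NT$.

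I do not expect a genuine obstacle: once the eigenbasis is fixed, the argument is a direct computation. The only point deserving care is the recognition that, because the kernel direction $\vec{\delta}$ is wiped out immediately, non-termination is decided entirely by the first two iterates — which is precisely why both inequalities $B\vec{x}>0$ and $BA\vec{x}>0$ are needed and why, together, they are sufficient.
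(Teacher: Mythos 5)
Your proof is correct and follows essentially the same route as the paper's: both decompose $\vec{x}$ in the eigenbasis $\{\vec{\gamma},\vec{\delta}\}$, use $BA\vec{x}>0$ to force the coefficient on $\vec{\gamma}$ to be positive (since the kernel component vanishes after one step), and conclude $BA^n\vec{x}>0$ for all $n$. Your write-up is in fact slightly more careful, as the paper leaves the easy inclusion $\NT\subseteq\{\vec{x}\mid B\vec{x}>0,\,BA\vec{x}>0\}$ as a one-line remark.
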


\begin{proof}
 Let $\vec{\beta}$  be an eigenvector with respect to eigenvalue 0 and $\lambda$ be the positive eigenvalue.
Let $S$ be the set $\{\vec{x}|B\vec{x}>0,BA\vec{x}>0\}$. For any $\vec{y} \in S$, it can be written as $k_1\vec{\beta}+k_2\vec{\gamma}$ for some $k_1\in \R,k_2 \in \R$. We have $BA\vec{y}=k_2\lambda B\vec{\gamma} >0$, thus $k_2>0$.  Note that  $BA^n\vec{y}=k_2\lambda^n\vec{\gamma}>0$ for any $n \in \mathbb{N}$,
 hence $S\subseteq \NT$. Because  $\{\vec{x}|B\vec{x}\leq 0 \vee BA\vec{x}\leq 0\} \cap \NT=\emptyset $,    $\NT=\{\vec{x}|B\vec{x}>0,BA\vec{x}>0\}$.
\end{proof}

\begin{lemma}\label{le:7}
Suppose $A$ has two positive eigenvalues $\lambda_1>\lambda_2>0$ and two eigenvectors $\vec{\beta_1}$ and $\vec{\beta_2}$ related to $\lambda_1$ and $\lambda_2$, respectively,
such that $B\vec{\beta_1}>0,B\vec{\beta_2}>0$. If $\vec{\alpha}$ is a vector such that $B\vec{\alpha}=0$ and $BA\vec{\alpha}>0$, then $\NT=\{\vec{x}|\vec{x}=k_{1}\vec{\alpha}+k_{2}\vec{\beta_{2}},k_{1}\geq 0,k_{2}>0\}.$
\end{lemma}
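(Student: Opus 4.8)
The plan is to work entirely in the eigenbasis $\{\vec{\beta_1},\vec{\beta_2}\}$, which is a basis of $\R^2$ since $\lambda_1\neq\lambda_2$, and to turn the non-termination condition into an explicit family of scalar inequalities. First I would locate $\vec{\alpha}$ in this basis: write $\vec{\alpha}=c_1\vec{\beta_1}+c_2\vec{\beta_2}$. From $B\vec{\alpha}=c_1 B\vec{\beta_1}+c_2 B\vec{\beta_2}=0$ together with $B\vec{\beta_1}>0$ and $B\vec{\beta_2}>0$, the coefficients $c_1,c_2$ must have opposite signs; substituting into $BA\vec{\alpha}=c_1\lambda_1 B\vec{\beta_1}+c_2\lambda_2 B\vec{\beta_2}=(\lambda_1-\lambda_2)\,c_1 B\vec{\beta_1}>0$ (using $c_2 B\vec{\beta_2}=-c_1 B\vec{\beta_1}$ and $\lambda_1>\lambda_2$) forces $c_1>0$ and $c_2<0$. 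I would record the key identity $p:=c_1 B\vec{\beta_1}=-c_2 B\vec{\beta_2}>0$, which is what makes the later bookkeeping collapse.

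For the inclusion $C\subseteq\NT$, where $C$ denotes the claimed sector $\{\vec{x}=k_1\vec{\alpha}+k_2\vec{\beta_2},\,k_1\geq 0,k_2>0\}$, I would first compute $BA^n\vec{\alpha}=c_1\lambda_1^n B\vec{\beta_1}+c_2\lambda_2^n B\vec{\beta_2}=p(\lambda_1^n-\lambda_2^n)$, which is $0$ at $n=0$ and strictly positive for $n\geq 1$. Then for $\vec{x}=k_1\vec{\alpha}+k_2\vec{\beta_2}$ with $k_1\geq 0$, $k_2>0$ I get $BA^n\vec{x}=k_1\,p(\lambda_1^n-\lambda_2^n)+k_2\lambda_2^n B\vec{\beta_2}$. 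At $n=0$ this equals $k_2 B\vec{\beta_2}>0$, and for $n\geq 1$ the first summand is $\geq 0$ and the second is $>0$, so $BA^n\vec{x}>0$ for every $n$, i.e. $\vec{x}\in\NT$.

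The harder direction is $\NT\subseteq C$, and this is where the real work sits. Here I would parametrize an arbitrary input as $\vec{x}=u\vec{\beta_1}+v\vec{\beta_2}$, so that $BA^n\vec{x}=u\,b_1\lambda_1^n+v\,b_2\lambda_2^n$ with $b_i:=B\vec{\beta_i}>0$. Since $\lambda_1>\lambda_2>0$, dividing by $\lambda_1^n$ and letting $n\to\infty$ shows the sign is eventually that of $u b_1$; hence $u<0$ forces $BA^n\vec{x}<0$ for large $n$ and excludes $\vec{x}$ from $\NT$. For $u\geq 0$ I would argue that the whole infinite family of inequalities reduces to the single one at $n=0$: factoring out $\lambda_2^n$, the quantity $u b_1(\lambda_1/\lambda_2)^n+v b_2$ is nondecreasing in $n$ (strictly increasing when $u>0$, constant when $u=0$) because $\lambda_1/\lambda_2>1$, so it is positive for all $n$ exactly when it is positive at $n=0$, that is, exactly when $u b_1+v b_2>0$. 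Thus $\NT=\{u\geq 0,\ u b_1+v b_2>0\}$ in these coordinates.

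Finally I would translate this back through the change of coordinates $\vec{x}=k_1\vec{\alpha}+k_2\vec{\beta_2}$, which gives $u=k_1 c_1$ and $v=k_1 c_2+k_2$. Using $c_1>0$ one has $u\geq 0\iff k_1\geq 0$, and using $c_1 b_1+c_2 b_2=B\vec{\alpha}=0$ the identity $u b_1+v b_2=k_2 b_2$ falls out immediately, so $k_2>0\iff u b_1+v b_2>0$; since $(k_1,k_2)\mapsto(u,v)$ is a linear bijection (as $\vec{\alpha},\vec{\beta_2}$ are independent, $c_1\neq 0$), the region $\{u\geq 0,\ u b_1+v b_2>0\}$ is exactly $C$, completing the proof. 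I expect the main obstacle to be the case $u>0$, $v<0$ in the reverse inclusion: positivity of $BA^n\vec{x}$ for all $n$ is not automatic there and genuinely requires the monotonicity argument, whereas every other case is settled by an asymptotic estimate or a direct sign count.
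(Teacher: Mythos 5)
Your proof is correct, and it takes a genuinely different route from the paper. The paper argues geometrically: it first notes $\vec{\beta_1},\vec{\beta_2}\in\NT$, so by its Proposition 2 the set $\NT$ is a full sector; it then invokes its Lemma 3 (one boundary ray of $\NT$ must lie on $B\vec{x}=0$) to conclude that $\overrightarrow{\vec{\alpha}}$ is one boundary ray, and its Lemma 1 ($A$ maps boundary points off $B\vec{x}=0$ to boundary points) to narrow the other boundary ray down to the candidates $\overrightarrow{\vec{\beta_1}}$, $\overrightarrow{\vec{\beta_2}}$, $\overrightarrow{A^{-1}\vec{\alpha}}$, settling the choice ``by directly checking.'' You instead solve the problem from scratch in the eigenbasis: you reduce membership in $\NT$ to the infinite family of scalar inequalities $u\,b_1\lambda_1^n+v\,b_2\lambda_2^n>0$, kill the case $u<0$ by the asymptotic dominance of $\lambda_1^n$, and collapse the case $u\geq 0$ to the single inequality at $n=0$ via the monotonicity of $u\,b_1(\lambda_1/\lambda_2)^n+v\,b_2$, then transport the answer back through the linear bijection $(k_1,k_2)\mapsto(u,v)$ using $c_1b_1+c_2b_2=B\vec{\alpha}=0$. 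What your approach buys is self-containedness and completeness: it does not depend on the paper's Lemmas 1--3 or Proposition 2, and it replaces the paper's unargued ``by directly checking'' step with an explicit computation that actually verifies both inclusions. What the paper's approach buys is economy within its own framework: the boundary-ray machinery is shared across Lemmas 7--10, so each individual case needs only a short identification argument rather than a fresh coordinate computation.
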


\begin{proof}
It is easy to  know  $\vec{\beta_1},\vec{\beta_2}  \in \NT$, thus \NT\ is neither empty nor a ray. By  Lemma \ref{le:3} there is a $\overrightarrow{\vec{y}}\subseteq \partial \NT $  and $\vec{y}$ satisfies $B\vec{y}=0$. Since  for any $\vec{z} \in \partial \NT$, we have $ BA\vec{z}\geq 0$. So $BA\vec{y}\geq 0$ and hence   $\overrightarrow{\vec{\alpha}}=\overrightarrow{\vec{y}}$. In other word,  $\overrightarrow{\vec{\alpha}}$ is one ray of $\partial \NT$. Let the  other ray of  $\partial \NT$ be  $l$. As $-BA\vec{\alpha} <0$,  $\overrightarrow{\vec{-\alpha}}$ is not $l$. By Lemma \ref{le:1}, we have  $Al \in \partial \NT$. So $l$ is one of  $\overrightarrow{\vec{\beta_1}},\overrightarrow{\vec{\beta_2}}$ and $\overrightarrow{\vec{A^{-1}}\alpha}$. By directly  checking, we know  $\overrightarrow{\vec{\beta_2}}$ is $l$  and so $\NT=\{\vec{x}|\vec{x}=k_1\vec{\alpha}+k_2\vec{\beta_2},k_1\geq 0,k_2>0\}$.
\end{proof}

\begin{lemma}\label{le:8}
Assume that $A$ has one positive eigenvalue $\lambda$ with multiplicity $2$ and only one eigenvector $\vec{\beta}$
satisfying $B\vec{\beta}>0$. If $\vec{\alpha}$ is a vector such that $B\vec{\alpha}=0$ and  $BA\vec{\alpha}>0$, then $\NT=\{\vec{x}|\vec{x}=h_{1}\vec{\alpha}+h_{2}\vec{\beta},k_{1}\geq 0,k_{2}>0\}$.
\end{lemma}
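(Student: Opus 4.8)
The plan is to follow the same strategy as in the proof of Lemma \ref{le:7}, the only genuinely new ingredient being that $A$ is now defective (it has a single eigenvector), which I would handle by working in the basis $\{\vec{\alpha},\vec{\beta}\}$. Since $B\vec{\alpha}=0$ and $B\vec{\beta}>0$, the vectors $\vec{\alpha}$ and $\vec{\beta}$ are linearly independent and hence form a basis of $\R^2$. Writing $A\vec{\alpha}=c\vec{\alpha}+d\vec{\beta}$ and using $A\vec{\beta}=\lambda\vec{\beta}$, the matrix of $A$ in this basis is lower triangular with diagonal entries $c$ and $\lambda$; as $\lambda$ is the only eigenvalue of $A$, this forces $c=\lambda$. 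Applying $B$ gives $BA\vec{\alpha}=dB\vec{\beta}$, so the hypothesis $BA\vec{\alpha}>0$ yields $d>0$. Thus $A\vec{\alpha}=\lambda\vec{\alpha}+d\vec{\beta}$ with $\lambda,d>0$, and an easy induction gives $A^{n}\vec{\alpha}=\lambda^{n}\vec{\alpha}+n\lambda^{n-1}d\vec{\beta}$.

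First I would check that \NT\ is neither empty nor a ray. Since $A\vec{\beta}=\lambda\vec{\beta}$ with $\lambda>0$ and $B\vec{\beta}>0$, we have $BA^{n}\vec{\beta}=\lambda^{n}B\vec{\beta}>0$ for all $n$, so $\vec{\beta}\in\NT$. Using the power formula, $BA^{n}(A\vec{\alpha})=BA^{n+1}\vec{\alpha}=(n+1)\lambda^{n}d\,B\vec{\beta}>0$ for every $n\in\N$, so $A\vec{\alpha}\in\NT$ as well. Because $A\vec{\alpha}=\lambda\vec{\alpha}+d\vec{\beta}$ is not parallel to $\vec{\beta}$, \NT\ contains two distinct rays through the origin; by Proposition \ref{prop:2} it is then a sector, hence neither empty nor a ray.

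Next, Proposition \ref{prop:2} together with Lemma \ref{le:3} shows that $\partial\NT$ consists of two rays, one of which lies on $B\vec{x}=0$, and I would identify these two rays exactly as in Lemma \ref{le:7}. Any boundary ray $\vec{z}$ satisfies $B\vec{z}\geq0$ and $BA\vec{z}\geq0$, since boundary points are limits of points of \NT, which satisfy $B\vec{x}>0$ and $BA\vec{x}>0$. The ray on $B\vec{x}=0$ is spanned by $\pm\vec{\alpha}$; since $BA\vec{\alpha}>0$ while $BA(-\vec{\alpha})<0$, it must be $\overrightarrow{\vec{\alpha}}$. For the other ray $l$, which is off $B\vec{x}=0$ so that $B\vec{p}>0$ for $\vec{p}\in l$, Lemma \ref{le:1} gives $A\vec{p}\in\partial\NT$, so $A\vec{p}$ lies on $\overrightarrow{\vec{\alpha}}$ or on $l$. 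The first possibility forces $l=\overrightarrow{A^{-1}\vec{\alpha}}$, which is impossible because a direct computation in the basis $\{\vec{\alpha},\vec{\beta}\}$ gives $A^{-1}\vec{\alpha}=\lambda^{-1}\vec{\alpha}-d\lambda^{-2}\vec{\beta}$ and hence $BA^{-1}\vec{\alpha}=-d\lambda^{-2}B\vec{\beta}<0$, contradicting $B\vec{p}>0$. Hence $A\vec{p}\in l$, so $\vec{p}$ is an eigenvector of $A$ with positive eigenvalue; as $\vec{\beta}$ spans the unique eigendirection and $B\vec{\beta}>0$, we get $l=\overrightarrow{\vec{\beta}}$.

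Finally I would pin down the exact set. The boundary rays $\overrightarrow{\vec{\alpha}}$ and $\overrightarrow{\vec{\beta}}$ cut the plane into two sectors, and \NT\ is one of them together with some of its boundary. Since $A\vec{\alpha}=\lambda\vec{\alpha}+d\vec{\beta}\in\NT$ with $\lambda,d>0$, the interior of \NT\ is the open sector $\{k_1\vec{\alpha}+k_2\vec{\beta}\mid k_1>0,k_2>0\}$. The ray $\overrightarrow{\vec{\beta}}$ belongs to \NT\ because $\vec{\beta}\in\NT$, whereas no nonzero point of $\overrightarrow{\vec{\alpha}}$ lies in \NT\ because $B\vec{\alpha}=0$ violates the guard; this yields exactly the claimed half-open sector $\{\vec{x}=k_1\vec{\alpha}+k_2\vec{\beta}\mid k_1\geq0,k_2>0\}$. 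The only real obstacle is the defectiveness of $A$: with a single eigenvector one cannot reuse the two eigenvector rays of Lemma \ref{le:7}, and the work lies in extracting the relation $A\vec{\alpha}=\lambda\vec{\alpha}+d\vec{\beta}$ and the resulting Jordan-type power formula, after which the boundary analysis is routine.
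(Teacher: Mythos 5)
Your proof is correct, and its skeleton is the same as the paper's: exhibit two linearly independent non-terminating inputs so that Proposition \ref{prop:2} forces \NT\ to be a sector, invoke Lemma \ref{le:3} to place one boundary ray on $B\vec{x}=0$ and the sign of $BA\vec{\alpha}$ to pin it to $\overrightarrow{\vec{\alpha}}$, then use Lemma \ref{le:1} to show the other ray is either an eigendirection or $\overrightarrow{A^{-1}\vec{\alpha}}$, and kill the latter by computing $BA^{-1}\vec{\alpha}<0$. Where you genuinely differ is in the linear-algebra bookkeeping that handles the defectiveness of $A$: the paper introduces a Jordan chain vector $\vec{\beta_1}$ with $A\vec{\beta_1}=\vec{\beta}+\lambda\vec{\beta_1}$ and runs an induction (with an existential quantifier on the $\vec{\beta_1}$-coefficient) to show $A\vec{\alpha}\in\NT$, whereas you work directly in the basis $\{\vec{\alpha},\vec{\beta}\}$, observe that $A$ is triangular there so the missing diagonal entry must equal $\lambda$, and obtain the closed-form $A^{n}\vec{\alpha}=\lambda^{n}\vec{\alpha}+n\lambda^{n-1}d\vec{\beta}$ with $d>0$, from which $A\vec{\alpha}\in\NT$ is immediate. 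Your choice of basis buys two things: it eliminates the auxiliary generalized eigenvector and the induction entirely, and it makes the paper's ``by directly checking'' steps explicit (both $BA^{-1}\vec{\alpha}=-d\lambda^{-2}B\vec{\beta}<0$ and the final determination of which parts of the boundary belong to \NT, namely $\overrightarrow{\vec{\beta}}$ in, $\overrightarrow{\vec{\alpha}}$ out). The paper's route, by contrast, stays closer to the generic Jordan-form machinery used elsewhere in the literature, but is terser at exactly those verification points.
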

\begin{proof}
By the theory of Jordan normal form in linear algebra, there exists a vector $\vec{\beta_1}$ such that $A\vec{\beta_1}=\vec{\beta}+\lambda\vec{\beta_1}$ and  $\vec{\beta}$ and $\vec{\beta_1}$ are linearly independent.

Let $\vec{\alpha_1}=A\vec{\alpha}$. We claim that
\begin{equation}\label{eq:2}
\forall n  \in \mathbb{N} . (BA^n\vec{\alpha_1}>0 \wedge \exists h_2>0. (A^n\vec{\alpha_1}=h_1\vec{\beta}+h_2\vec{\beta_1})).
\end{equation}
To prove this claim we use induction on the value of $n$.

Suppose $\vec{\alpha}=h_1\vec{\beta}+h_2\vec{\beta_1}$. If $n=0$, then $\vec{\alpha_1}=A\vec{\alpha}=(h_1\lambda+h_2)\vec{\beta}+h_2\lambda\vec{\beta_1}$. Because $ B\vec{\alpha_1}=\lambda B\vec{\alpha}+h_2 B\vec{\beta}=h_2 B\vec{\beta}>0$, we have $h_2>0$.

Now assume that the claim is true for $n-1$. Let $A^{n-1}\vec{\alpha_1}=h_1\vec{\beta}+h_2\vec{\beta_1}$ where $h_2>0$. Because $A^n\vec{\alpha_1}=A(A^{n-1}\vec{\alpha_1})=(\lambda h_1+h_2)\vec{\beta}+\lambda h_2\vec{\beta_1}$, we have $\lambda h_2>0$ and $BA^n\vec{\alpha_1}=\lambda BA^{n-1}\vec{\alpha_1}+ h_2 B\vec{\beta}>0$. So   the claim is true for any   $n \in \mathbb{N}$ and we have  $\vec{\alpha_1}\in \NT$.

Obviously, $\vec{\beta} \in \NT$ and $\vec{\beta}$ and $\vec{\alpha_1}$ are linearly independent, so \NT\ is not a ray. By Lemma \ref{le:3},   $\overrightarrow{\vec{\alpha}}\subseteq \partial \NT$.

Let the other ray of $\partial \NT $ be $l$. As $-BA\vec{\alpha}<0$, $\overrightarrow{\vec{-\alpha}}$ is not $l$. By Lemma \ref{le:1}, $Al=l $ or $ Al =\overrightarrow{\vec{\alpha}}$. So $l$ must be $\overrightarrow{\vec{\beta}}$ or $\overrightarrow{\vec{A^{-1}\alpha}}$. By  directly checking, we know  $l$ is $\overrightarrow{\vec{\beta}}$ and thus  $\NT=\{\vec{x}|\vec{x}=k_1\vec{\alpha}+k_2\vec{\beta},k_1\geq 0,k_2>0\}$.
\end{proof}

\begin{lemma}\label{le:9}
Suppose $A$ has a positive eigenvalue $\lambda_1$ and a negative eigenvalue $\lambda_2$ with $\lambda_1\geq |\lambda_2|$ and two eigenvectors $\vec{\beta_1}$ and $\vec{\beta_2}$ related to $\lambda_1$ and $\lambda_2$, respectively,
such that $B\vec{\beta_1}>0,B\vec{\beta_2}>0$. Suppose $\vec{\alpha}$ is a vector such that
$B\vec{\alpha}=0$ and $BA\vec{\alpha}>0$. Let $\vec{\alpha_{-1}}=A^{-1}\vec{\alpha}$, $\vec{\alpha_1}=A\vec{\alpha}$. Then $\NT=\{k_1\vec{\alpha}+k_2\vec{\alpha_{-1}},k_1>0,k_2>0\}$.
\end{lemma}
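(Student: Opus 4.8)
The plan is to work in the eigenbasis $\{\vec{\beta_1},\vec{\beta_2}\}$, which is a basis of $\R^2$ since $\lambda_1\neq\lambda_2$, and to track how $B$ evaluates along the orbit of $\vec{\alpha}$. Writing $\vec{\alpha}=a\vec{\beta_1}+b\vec{\beta_2}$ and putting $p=B\vec{\beta_1}>0$, $q=B\vec{\beta_2}>0$, a short sign computation from the two hypotheses $B\vec{\alpha}=0$ and $BA\vec{\alpha}>0$ (using $\lambda_1>0>\lambda_2$) forces $a>0$ and $b<0$. The central object is the scalar sequence $c_n:=BA^n\vec{\alpha}=ap(\lambda_1^n-\lambda_2^n)$, which makes sense for every integer $n$ (so $c_{-1}=B\vec{\alpha_{-1}}$ and $c_0=B\vec{\alpha}=0$). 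Since $ap>0$, the hypothesis $\lambda_1\geq|\lambda_2|$ yields exactly the sign pattern I need: $c_{-1}>0$, $c_0=0$, and for every $n\geq1$ one has $c_n\geq0$, with $c_n>0$ whenever $n$ is odd (and, when $\lambda_1>|\lambda_2|$, for all $n\geq1$).

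Next I would prove the forward inclusion $S:=\{k_1\vec{\alpha}+k_2\vec{\alpha_{-1}}\mid k_1>0,k_2>0\}\subseteq\NT$. For $\vec{w}=k_1\vec{\alpha}+k_2\vec{\alpha_{-1}}$ one has $A^n\vec{w}=k_1A^n\vec{\alpha}+k_2A^{n-1}\vec{\alpha}$, hence
\[BA^n\vec{w}=k_1c_n+k_2c_{n-1}\qquad(n\geq 0).\]
For $n=0$ this equals $k_2c_{-1}>0$; for $n\geq1$ both $c_n$ and $c_{n-1}$ are $\geq0$, and since consecutive integers cannot both be even, at least one of the indices $n-1,n$ is an odd integer $\geq1$, which makes the corresponding term strictly positive. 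As $k_1,k_2>0$, this gives $BA^n\vec{w}>0$ for all $n$, so $\vec{w}\in\NT$. In particular $\vec{\alpha}$ and $\vec{\alpha_{-1}}$ are linearly independent, so $S$ is a genuine two-dimensional sector and, by Proposition \ref{prop:2}, $\NT$ is neither empty nor a ray but a sector whose boundary is exactly two rays.

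Finally I would pin down that boundary. Note $\vec{\alpha}\notin\NT$ because $B\vec{\alpha}=c_0=0$, and $\vec{\alpha_{-1}}\notin\NT$ because $BA\vec{\alpha_{-1}}=B\vec{\alpha}=c_0=0$; yet both rays $\overrightarrow{\vec{\alpha}}$ and $\overrightarrow{\vec{\alpha_{-1}}}$ lie in $\overline{S}\subseteq\overline{\NT}$, obtained as limits of points of $S$ as $k_2\to0^+$ and $k_1\to0^+$ respectively. Hence both rays lie in $\partial\NT$, and since they are distinct they must be precisely the two boundary rays guaranteed by Proposition \ref{prop:2}. As $\NT$ contains the open region $S$ between them and excludes both, $\NT=S$, which is the claimed set. (Alternatively, one routes this through Lemma \ref{le:3} to force the boundary ray on $B\vec{x}=0$ to be $\overrightarrow{\vec{\alpha}}$, using $BA\vec{\alpha}>0$ to discard $\overrightarrow{\vec{-\alpha}}$, and through Lemma \ref{le:1} to identify the other ray.)

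I expect the main obstacle to be the forward inclusion in the borderline case $\lambda_1=|\lambda_2|$: there $c_n=0$ for every even $n\geq0$, so $\vec{\alpha_1}=A\vec{\alpha}$ actually lands on the boundary ray $\overrightarrow{\vec{\alpha_{-1}}}$ and the naive inequalities degenerate to equalities. The parity observation---that $n-1$ and $n$ are never simultaneously even, so one strictly positive term always survives in $k_1c_n+k_2c_{n-1}$---is what rescues strict positivity, and this is exactly where the hypothesis $\lambda_1\geq|\lambda_2|$ is indispensable: were $\lambda_1<|\lambda_2|$, the terms $c_n$ would turn negative for large even $n$ and the sector would collapse to a single ray, the regime of Lemma \ref{le:10}.
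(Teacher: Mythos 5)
Your proof is correct, but both halves of it run along genuinely different lines from the paper's. For the inclusion $S\subseteq\NT$, the paper never computes the orbit explicitly: it uses the single identity $\vec{\alpha_1}=(\lambda_1+\lambda_2)\vec{\alpha}-\lambda_1\lambda_2\vec{\alpha_{-1}}$ (Cayley--Hamilton in disguise), notes that $\lambda_1+\lambda_2\geq 0$ and $-\lambda_1\lambda_2>0$ --- which is exactly where $\lambda_1\geq|\lambda_2|$ enters --- and concludes that $A$ maps $S$ into itself while the guard holds on all of $S$; one-step invariance then gives non-termination with no case analysis on $n$. Your explicit formula $BA^n\vec{w}=k_1c_n+k_2c_{n-1}$ with $c_n=ap(\lambda_1^n-\lambda_2^n)$, plus the parity argument, achieves the same thing and has the merit of making visible exactly how the borderline case $\lambda_1=|\lambda_2|$ survives (the even-index terms vanish but never both at once). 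For the reverse inclusion the paper is markedly more elementary than you are: writing an arbitrary $\vec{y}=k_1\vec{\alpha}+k_2\vec{\alpha_{-1}}$, it observes $B\vec{y}=k_2B\vec{\alpha_{-1}}$ and $BA\vec{y}=k_1B\vec{\alpha_1}$, so if $k_2\leq 0$ the guard fails immediately and if $k_1\leq 0$ it fails after one iteration; hence $\NT\subseteq S$ by a two-step termination check, with no topology at all. Your route through $\partial\NT$ and Proposition \ref{prop:2} is sound --- the two rays are non-collinear because $B\vec{\alpha}=0$ while $B\vec{\alpha_{-1}}=c_{-1}>0$, which is the justification for linear independence you should state explicitly rather than signal with ``in particular'' --- but it leans on the convexity of \NT\ and on the informal notion of ``sector between two rays,'' both of which the paper's direct computation bypasses entirely.
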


\begin{proof}

Let $\vec{\alpha_{-1}}=h_1\vec{\beta_1}+h_2\vec{\beta_2}$. So $\vec{\alpha}=A\vec{\alpha_{-1}}=h_1\lambda_1\vec{\beta_1}+h_2\lambda_2\vec{\beta_2}$ and $\vec{\alpha_1}=A\vec{\alpha}=h_1\lambda_1^2\vec{\beta_1}+h_2\lambda_2^2\vec{\beta_2}$.
 Because $B\vec{\alpha}=0$ and $B\vec{\alpha_1}>0$, $h_1$, $h_2$ and $A\vec{\alpha_{-1}}$ are all positive.

Note that $\vec{\alpha_1}=(-\lambda_1\lambda_2)\vec{\alpha_{-1}}+(\lambda_1+\lambda_2)\vec{\alpha}$ where $-\lambda_1\lambda_2>0$ and $\lambda_1+\lambda_2\geq 0$.
Let $S=\{\vec{x}|\vec{x}=k_1\vec{\alpha}+k_2\vec{\alpha_{-1}}$, $k_1>0$, $k_2>0\}$. Since $B\vec{y}=k_2B\vec{\alpha_{-1}}>0$  and $A\vec{y}=(k_2+k_1(\lambda_1+\lambda_2))\vec{\alpha}-k_1\lambda_1\lambda_2\vec{\alpha_{-1}}\in S$  for any $\vec{y}\in S$, we have $\NT\supseteq S$.

Let $\vec{y}=k_1\vec{\alpha}+k_2\vec{\alpha_{-1}}$. Because $B\vec{y}=k_2B\vec{\alpha_{-1}}\leq 0$ for any  $k_2\leq 0$ and  $BA\vec{y}=k_1B\vec{\alpha_1}\leq 0$ for any $k_1\leq 0$, we have $\NT=S$.
\end{proof}

\begin{lemma}\label{le:10}
Suppose A has a positive eigenvalue $\lambda_1$ and a negative eigenvalue $\lambda_2$ such that $\lambda_1<|\lambda_2|$. If there are two
eigenvectors $\vec{\beta_1}$ and $\vec{\beta_2}$ related to $\lambda_1$ and $\lambda_2$, respectively,
such that $B\vec{\beta_1}>0$ and $B\vec{\beta_2}>0$, then $\NT=\{\vec{x}|\vec{x}=k\vec{\beta_1},k>0\}$.
\end{lemma}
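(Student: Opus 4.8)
The plan is to argue directly in the eigenbasis rather than going through the boundary Lemmas \ref{le:1}--\ref{le:3}. Since $\lambda_1\neq\lambda_2$, the eigenvectors $\vec{\beta_1}$ and $\vec{\beta_2}$ are linearly independent and hence form a basis of $\R^2$; the entire argument then reduces to tracking the sign of $BA^n$ applied to an arbitrary vector written in this basis. First I would dispose of the easy inclusion: for any $k>0$ we have $BA^n(k\vec{\beta_1})=k\lambda_1^n\,(B\vec{\beta_1})>0$ for all $n\in\N$, because $\lambda_1>0$ and $B\vec{\beta_1}>0$. By the definition of $\NT$ this already gives $\{\vec{x}\mid\vec{x}=k\vec{\beta_1},k>0\}\subseteq\NT$.

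For the reverse inclusion I would take an arbitrary $\vec{y}\in\NT$ and write $\vec{y}=c_1\vec{\beta_1}+c_2\vec{\beta_2}$, so that
\[
BA^n\vec{y}=c_1\lambda_1^n\,(B\vec{\beta_1})+c_2\lambda_2^n\,(B\vec{\beta_2})
\]
must be strictly positive for every $n\in\N$. The crux is a dominance argument that uses the strict hypothesis $\lambda_1<|\lambda_2|$: writing $\lambda_2=-|\lambda_2|$, the second summand equals $c_2(-1)^n|\lambda_2|^n\,(B\vec{\beta_2})$, and since $|\lambda_2|>\lambda_1>0$ its magnitude eventually dominates the first summand. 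Hence if $c_2\neq 0$, the sign of $BA^n\vec{y}$ agrees with the sign of $c_2(-1)^n\,(B\vec{\beta_2})$ for all large $n$, which alternates and so cannot remain positive --- a contradiction. Therefore $c_2=0$, and then the case $n=0$ forces $c_1\,(B\vec{\beta_1})=B\vec{y}>0$, i.e. $c_1>0$. This yields $\NT\subseteq\{\vec{x}\mid\vec{x}=k\vec{\beta_1},k>0\}$ and completes the equality.

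The main obstacle is making the asymptotic dominance precise, so I would make it rigorous by dividing through by $|\lambda_2|^n$: the normalized first term $c_1(\lambda_1/|\lambda_2|)^n\,(B\vec{\beta_1})$ tends to $0$ because $\lambda_1/|\lambda_2|<1$, so for large $n$ the sign of $BA^n\vec{y}/|\lambda_2|^n$ is governed by $c_2(-1)^n\,(B\vec{\beta_2})$; comparing two consecutive large values of $n$ then exhibits both a positive and a negative value of $BA^n\vec{y}$ whenever $c_2\neq 0$. This is exactly the place where the strict inequality $\lambda_1<|\lambda_2|$ is indispensable, in contrast to the complementary regime $\lambda_1\geq|\lambda_2|$ handled in Lemma \ref{le:9}, where the positive eigenvalue is the dominant one and the $\NT$ is a full sector instead of a single ray.
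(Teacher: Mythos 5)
Your proposal is correct and takes essentially the same route as the paper: the paper's own proof also decomposes an arbitrary vector as $k_1\vec{\beta_1}+k_2\vec{\beta_2}$ in the eigenbasis, observes that when $k_2\neq 0$ the dominant alternating term forces $BA^n\vec{x}\cdot BA^{n+1}\vec{x}<0$ for large $n$ (hence the point is not in $\NT$), and handles $k_2=0$ directly. Your write-up merely makes the dominance step more explicit (dividing by $|\lambda_2|^n$) than the paper does, which is a welcome addition but not a different argument.
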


\begin{proof}

Consider any  $\vec{\beta}=k_1\vec{\beta_1}+k_2\vec{\beta_2} \in \R^2$.

If $k_2\neq 0$, because $A^n(k_1\vec{\beta_1}+k_2\vec{\beta_2})=k_1\lambda_1^n\vec{\beta_1}+k_2\lambda_2^n\vec{\beta_2}$ and
\[ BA^n(k_1\vec{\beta_1}+k_2\vec{\beta_2})BA^{n+1}(k_1\vec{\beta_1}+k_2\vec{\beta_2})<0\]  when $n$ is large enough,
 $k_1\vec{\beta_1}+k_2\vec{\beta_2} \notin \NT$.

If $k_2=0$, obviously, $\NT\supseteq \{\vec{x}|\vec{x}=k\vec{\beta_1},k>0\}$ and  $Bk\vec{\beta_1}\ \not \in \NT$ for any $k\leq 0$.

So $\NT= \{\vec{x}|\vec{x}=k\vec{\beta_1},k>0\}$.
\end{proof}


Now, the correctness of our algorithm \NTAlg\ can be easily obtained as follows. 

\begin{theorem}
The algorithm \NTAlg\ is correct.
\end{theorem}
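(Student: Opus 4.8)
The plan is to prove correctness by a case analysis that follows the control flow of \NTAlg\ exactly, showing that the branches partition all admissible inputs $(A,B)$ and that, on each branch, the returned set equals \NT\ because the hypotheses of the lemma (or proposition) cited in that branch are met. Since every return statement already carries a pointer to the result that certifies it, the work reduces to two things: (i) verifying that the guards leading to each return genuinely establish the hypotheses of the invoked lemma, and (ii) checking that the guards are mutually exclusive and jointly exhaustive, so that exactly one branch fires for every $(A,B)$.

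First I would dispose of the preliminary and terminal branches. When $A=\mathbf 0$ or $B=\mathbf 0$ the loop terminates on every input, so $\NT=\emptyset$, matching the first return. Assuming $B\neq\mathbf 0$, the guard line $\{\,\vec x\mid B\vec x=0\,\}$ is one-dimensional, so a nonzero $\vec\alpha_0$ on it exists, and after possibly replacing $\vec\alpha_0$ by $-\vec\alpha_0$ we may assume $BA\vec\alpha_0\ge 0$; this justifies the choice at the corresponding step. For the eigenvalue test, I would note that if $A$ has complex (non-real) eigenvalues then neither $\lambda_1$ nor $\lambda_2$ is a positive real, so the test $\lambda_1\ngtr0\wedge\lambda_2\ngtr0$ fires and returns $\emptyset$, which agrees with Proposition~\ref{pro:Tiwari}. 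Thus reaching any later branch guarantees a real spectrum with at least one positive eigenvalue.

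The heart of the argument is matching the remaining branches to Lemmas~\ref{le:4}--\ref{le:10} (identifying the algorithm's $\vec\alpha_0$ with the lemmas' $\vec\alpha$). The key bridging observation is that once the $BA\vec\alpha_0=0$ test has been passed we have $BA\vec\alpha_0>0$, and this single fact controls everything downstream. In dimension two the line $B\vec x=0$ is spanned by $\vec\alpha_0$, so $BA\vec\alpha_0=0$ is \emph{equivalent} to $A\vec\alpha_0\parallel\vec\alpha_0$, i.e.\ to $\vec\alpha_0$ being an eigenvector lying on $B\vec x=0$; hence the $BA\vec\alpha_0=0$ branch is exactly the situation of Lemmas~\ref{le:4} and~\ref{le:5}, and the subtest $B(A\vec\xi)>0$ versus $\le 0$ selects between them. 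Conversely, once $BA\vec\alpha_0>0$ we know $\vec\alpha_0$ is \emph{not} an eigenvector, which is what lets me upgrade the algorithm's non-strict choices such as ``$B\vec\beta_2\ge0$'' to the strict hypotheses ``$B\vec\beta_2>0$'' demanded by Lemmas~\ref{le:7}--\ref{le:10}: if some relevant eigenvector $\vec\beta$ satisfied $B\vec\beta=0$ it would be parallel to $\vec\alpha_0$, forcing $\vec\alpha_0$ to be an eigenvector and $BA\vec\alpha_0=0$, a contradiction. With this in hand the remaining split by sign and multiplicity is routine: $\lambda_1=0\vee\lambda_2=0$ together with a positive eigenvalue gives Lemma~\ref{le:6}; $\lambda_1>\lambda_2>0$ gives the diagonalizable case of Lemma~\ref{le:7}; $\lambda_1=\lambda_2>0$ gives Lemma~\ref{le:8}; and $\lambda_1>0>\lambda_2$ splits into $\lambda_1\ge|\lambda_2|$ (Lemma~\ref{le:9}) and $\lambda_1<|\lambda_2|$ (Lemma~\ref{le:10}).

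The step I expect to be the main obstacle is the completeness bookkeeping around repeated positive eigenvalues, because Lemma~\ref{le:8} assumes a \emph{defective} matrix (only one eigenvector), yet $\lambda_1=\lambda_2>0$ also admits the scalar case $A=\lambda I$. I would resolve this by observing that if $A=\lambda I$ then $BA\vec\alpha_0=\lambda B\vec\alpha_0=0$, so this case is intercepted at the $BA\vec\alpha_0=0$ branch and handled (correctly, as $\NT=\{\vec x\mid B\vec x>0\}$) by Lemma~\ref{le:4}; consequently any input that actually reaches the $\lambda_1=\lambda_2>0$ branch has $BA\vec\alpha_0>0$, hence a non-eigenvector $\vec\alpha_0$, hence a genuinely defective $A$ with a unique eigenvector direction on which $B$ does not vanish, which is exactly the hypothesis of Lemma~\ref{le:8}. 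Assembling these observations shows the guards are exhaustive and nonoverlapping and that each return is certified by its cited lemma, which is the assertion of the theorem.
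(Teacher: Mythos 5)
Your proposal is correct and follows essentially the same route as the paper: the paper's own proof is just a terse statement that the algorithm terminates, that its branches exhaust all eigenvalue cases, and that each return is certified by the corresponding Lemma \ref{le:4}--\ref{le:10}. Your write-up supplies the details the paper leaves implicit (notably the observation that $BA\vec{\alpha_0}>0$ forces every eigenvector off the line $B\vec{x}=0$, which upgrades the algorithm's non-strict choices to the strict hypotheses of the lemmas, and the interception of $A=\lambda I$ at the $BA\vec{\alpha_0}=0$ branch), but the decomposition and certifying lemmas are identical.
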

\begin{proof}
First, the termination of \NTAlg\ is obvious because there are no loops and no iterations in it. Second,
it is also clear that the algorithm discusses all the cases of eigenvalues of $A$, respectively.
According to Lemmas 4-10 (each of them corresponds to a certain case in the algorithm as commented in the algorithm),
the output of the algorithm in each case is correct.
\end{proof}

\begin{example}
Compute the \NT\ of the following loop.
\begin{displaymath}
{\rm while}~ (4x_{1}+x_{2}>0)\quad
\left\{\left(
    \begin{array}{c}
      x_{1} \\
      x_{2} \\
    \end{array}
  \right)=
              \left(
                            \begin{array}{cc}
                                -2 & 4 \\
                                 4 & 0\\
                                    \end{array}
                                \right)
\left(
     \begin{array}{c}
       x_{1} \\
       x_{2} \\
     \end{array}
   \right)
   \right\}
\end{displaymath}
Herein, $B=(4,1), A=\left(
             \begin{array}{cc}
               -2 & 4 \\
               4 & 0\\
             \end{array}
           \right).
$
\end{example}
The computation of \NTAlg\ on the loop is:

Line 1. $B\neq0$ and $A\neq 0$.

Line 4. $A$ has a positive eigenvalue $-1+\sqrt{17}$.

Line 6. Let $\vec{\alpha_{0}}=(-1,4)^{T},\vec{\alpha_{1}}=A\vec{\alpha_{0}}=(18,-4)^{T}$.

Line 7. $B\vec{\alpha_{1}}=68\neq 0 $.

Line 13. The two eigenvalues of $A$ are $-1+\sqrt{17},-1-\sqrt{17}$, respectively. Neither of them is $0$.

Line 19. $A$ has two eigenvalues, of which one is positive and the other negative.

Line 20. The absolute value of the negative eigenvalue is greater than the positive eigenvalue.

Line 22. The eigenvector with respect to the positive eigenvalue is $\vec{\beta}=(1,\frac{\sqrt{17}+1}{4})^{T}$ and $B\vec{\beta}> 0$. Return  $\{\vec{x}|\vec{x}=k\vec{\beta},k>0\}$.


\section{More variables}\label{sec:more}

\begin{theorem} \label{th:1}In general, \NT\ is not a semi-algebraic set.
\end{theorem}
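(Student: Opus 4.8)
The plan is to refute semi-algebraicity by an explicit three-variable example together with a connected-component count. I would take the diagonal loop given by $A=\mathrm{diag}(4,2,1)$ and $B=(1,1,1)$, i.e. $\mathrm{while}\ (x_1+x_2+x_3>0)\ \{\vec{x}:=A\vec{x}\}$. Here $A$ has the three positive eigenvalues $4,2,1$ (so by Proposition \ref{pro:Tiwari} emptiness is not forced), and since $A^n\vec{x}=(4^nx_1,2^nx_2,x_3)$, the defining condition unwinds to $\NT=\{\vec{x}\in\R^3\mid 4^nx_1+2^nx_2+x_3>0\ \text{for all}\ n\in\N\}$. The point of using three variables is that two positive eigenvalues only ever produce a sector (cf. Lemma \ref{le:7}), which is semi-algebraic; the third eigenvalue is what lets an unbounded, genuinely arithmetic family of constraints survive.

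The key manoeuvre is the substitution $x=2^n$, which turns the loop guard into a fixed quadratic evaluated along a geometric sequence: writing $q(x)=x_1x^2+x_2x+x_3$, membership in $\NT$ is exactly $q(2^n)>0$ for every $n\ge 0$. I would then slice $\NT$ by the (semi-algebraic) curve $x_1=1,\ x_2=-2t,\ x_3=t^2-\delta$ for a small fixed $\delta>0$, so that $q$ becomes $q_t(x)=(x-t)^2-\delta$. Along this slice, $\vec{x}\in\NT$ holds if and only if $(2^n-t)^2>\delta$ for all $n\ge 0$, that is, $|t-2^n|>\sqrt{\delta}$ for every $n\ge 0$. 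Hence the parameter set describing $\NT$ on this slice is $\R\setminus\bigcup_{n\ge 0}[\,2^n-\sqrt{\delta},\,2^n+\sqrt{\delta}\,]$.

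For $\delta$ small enough (say $\sqrt{\delta}<\tfrac12$) the removed intervals around $1,2,4,8,\dots$ are pairwise disjoint, so this set has infinitely many connected components. Now I would invoke that a semi-algebraic subset of $\R$ is a finite union of points and open intervals and therefore has only finitely many connected components, while the intersection of two semi-algebraic sets is semi-algebraic and our slicing curve is semi-algebraic and faithfully parametrised by $t$. If $\NT$ were semi-algebraic, its intersection with the slice would be a semi-algebraic subset of $\R$ with infinitely many components — a contradiction. Thus $\NT$ is not semi-algebraic, proving Theorem \ref{th:1}.

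The step needing the most care is the choice of the example and the verification that the slice is captured exactly, with no hidden collapse. The substitution $x=2^n$ is what makes the argument clean: it exposes the constraints $q(2^n)>0$ as positivity of one fixed low-degree polynomial on an infinite geometric progression, so that sliding the root-interval of $q_t$ through the gaps of $\{2^n\}$ manufactures infinitely many components by elementary means and avoids any heavier structure theory. I would double-check that $\vec x$ is recovered injectively along the slice (it is, since $t=-x_2/2$) and that the strict inequality only removes the closed endpoint intervals, leaving the infinitely many open gaps intact, so the component count is unaffected.
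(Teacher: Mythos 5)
Your proof is correct, but it follows a genuinely different route from the paper. The paper also argues by an explicit three-variable counterexample, but a different one: $A=\mathrm{diag}(2,3,5)$ with the non-strict guard $x_1+2x_2+x_3\geq 0$ (chosen so that \NT\ is closed). It then proves, through a chain of lemmas, that the explicit sequence of points $\vec{p}_n=(2^{-n},-3^{-n},5^{-n})$ lies on $\partial\NT$, shows that no nonzero polynomial in $\R[x_1,x_2,x_3]$ can vanish on infinitely many of these points (a dominant-monomial limit argument using the multiplicative independence of $2,3,5$), and concludes by pigeonhole, using the fact that the boundary of a semi-algebraic set must be contained in the union of the zero sets of its finitely many defining polynomials. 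Your argument instead restricts \NT\ to the polynomial curve $t\mapsto(1,-2t,t^2-\delta)$, pulls back the defining formulas by substitution to get a semi-algebraic subset of $\R$, and derives a contradiction from the elementary one-dimensional structure theorem (finite union of points and intervals, hence finitely many connected components), since the pulled-back set $\R\setminus\bigcup_{n\geq 0}[2^n-\sqrt{\delta},\,2^n+\sqrt{\delta}]$ has infinitely many components. What your approach buys: it avoids the paper's boundary analysis entirely (Lemmas \ref{lem:32}--\ref{lem:34}, which need the closedness of \NT\ and hence the non-strict guard) as well as the nonvanishing Lemma \ref{lem:35}; it needs only closure of semi-algebraic sets under polynomial substitution and the one-variable structure fact, and your example even conforms to the strict-inequality format {\tt P2} used elsewhere in the paper. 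What the paper's approach buys: it exhibits explicit points of $\partial\NT$, giving more direct information about the geometry of the non-termination set itself. One small point of care in your write-up: rather than phrasing the reduction as ``intersect with the curve and reparametrise,'' it is cleanest to say that substituting $(x_1,x_2,x_3)=(1,-2t,t^2-\delta)$ into any defining Boolean combination of polynomial (in)equalities for \NT\ yields a defining Boolean combination for the parameter set $T\subseteq\R$; injectivity of the parametrisation ($t=-x_2/2$) then plays no essential role and the component count is immediate.
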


\begin{remark}
All Tarski formulae are in the form of conjunctions or/and disjunctions of polynomial equalities and/or inequalities, so, in other words, semi-algebraic sets are exactly the sets defined by Tarski formulae. By Theorem \ref{th:1}, we can conclude that the non-termination sets of linear loops with more than two variables cannot be defined by Tarski formulae in general.
\end{remark}

\begin{remark}
It should be noticed that all polynomial invariants are semi-algebraic sets.
\end{remark}

In order to prove the above theorem, we give an example to demonstrate its NT is not a semi-algebraic set.

\begin{proposition} Let a linear loop with three program variables be as follows.
\[
{\tt P3:}\  {\rm while}\ (x_1+2x_2+x_3\geq 0)\quad
\left\{\left(
    \begin{array}{c}
      x_1 \\
      x_2 \\
      x_3\\
    \end{array}
  \right)=
              \left(
                            \begin{array}{ccc}
                                2 & 0&0 \\
                                 0 & 3&0\\
                                 0 & 0 &5\\
                                    \end{array}
                                \right)
\left(
     \begin{array}{c}
       x_1 \\
       x_2 \\
       x_3\\
     \end{array}
   \right)
   \right\}.
\]
Then \NT{\rm ({\tt P3})} is not a semi-algebraic set.
\end{proposition}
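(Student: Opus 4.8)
The plan is to make the non-termination condition fully explicit and then contradict the assumption that \NT\ is semi-algebraic. Since the update matrix is diagonal and the guard is $B\vec{x}\geq 0$ with $B=(1,2,1)$, a point $\vec{x}=(x_1,x_2,x_3)$ lies in \NT({\tt P3}) if and only if $BA^n\vec{x}\geq 0$ for every $n\in\N$, i.e.
\[
f_n(\vec{x}):=2^n x_1+2\cdot 3^n x_2+5^n x_3\geq 0\qquad\text{for all }n\in\N .
\]
Thus \NT\ is the intersection of the countably many half-spaces $\{\vec{x}\mid f_n(\vec{x})\geq 0\}$, a closed convex cone. If \NT\ were semi-algebraic, then so would be its planar slice $K=\NT\cap\{x_3=1\}$, which is exactly $K=\{(x_1,x_2)\mid g_n(x_1,x_2)\geq 0 \text{ for all }n\in\N\}$, where $g_n(x_1,x_2)=2^n x_1+2\cdot 3^n x_2+5^n$ and $L_n\colon g_n=0$. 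I will reach a contradiction by showing $\partial K$ has infinitely many corner points, which a planar semi-algebraic set cannot have.

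The core step is to exhibit infinitely many genuine vertices of $K$. For each $n$ I take $\vec{P}_n=L_n\cap L_{n+1}$; a short computation gives $\vec{P}_n=\bigl(2(5/2)^n,\,-\tfrac32(5/3)^n\bigr)$. Evaluating the $m$-th constraint at this point yields $g_m(\vec{P}_n)=5^n\,\psi(m-n)$, where
\[
\psi(k):=2^{k+1}-3^{k+1}+5^{k}.
\]
The decisive fact is the elementary inequality $\psi(k)\geq 0$ for all integers $k$, with equality exactly at $k\in\{0,1\}$. I would prove this by splitting into the ranges $k\geq 2$ (where $5^k$ dominates), $k\in\{0,1\}$ (direct evaluation, both give $0$), and $k\leq -1$ (where $2^{k+1}-3^{k+1}\geq 0$ and $5^{k}>0$). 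Granting this, every $\vec{P}_n$ satisfies all constraints, so $\vec{P}_n\in K$, and the only tight constraints at $\vec{P}_n$ are $m=n$ and $m=n+1$, whose normals $(2^n,2\cdot3^n)$ and $(2^{n+1},2\cdot3^{n+1})$ are linearly independent. Hence each $\vec{P}_n$ is a vertex of $K$, i.e.\ a non-smooth corner of $\partial K$, and the $\vec{P}_n$ are pairwise distinct (their coordinates tend to $+\infty$ and $-\infty$ respectively).

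Finally I invoke the structure theory of semi-algebraic sets: a semi-algebraic subset of $\R^2$ admits a finite cell decomposition, so the set of points at which its boundary fails to be a smooth $1$-manifold is a $0$-dimensional semi-algebraic set and therefore finite. Since $\partial K$ contains the infinitely many distinct corners $\vec{P}_n$, $K$ cannot be semi-algebraic, and hence neither can \NT({\tt P3}).

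I expect the main obstacle to be the inequality $\psi(k)\geq 0$ together with its exact equality set $\{0,1\}$: this is precisely where the particular eigenvalues $2,3,5$ and the guard coefficients $(1,2,1)$ enter, and it is what guarantees that infinitely many of the constraints $f_n\geq 0$ are simultaneously irredundant (so that infinitely many vertices really appear). A secondary point requiring care is the reduction to the planar slice and the appeal to finiteness of corner points for planar semi-algebraic sets via cell decomposition.
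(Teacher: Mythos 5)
Your proposal is correct in substance, but it takes a genuinely different route from the paper's. The paper never writes the constraint system down explicitly: it exhibits the sequence $\vec{p}_n=(2^{-n},-3^{-n},5^{-n})$, proves by induction (using closedness of \NT\ and invertibility of $A$, Lemmas \ref{lem:32}--\ref{lem:33}) that every $\vec{p}_n$ lies on $\partial\NT$, and then gets a contradiction from the disjunctive normal form of a semi-algebraic set: each boundary point must satisfy one of finitely many polynomial equations, yet a dominant-monomial argument (Lemma \ref{lem:35}) shows that no nonzero polynomial vanishes on infinitely many $\vec{p}_n$; pigeonhole finishes. You instead use the diagonality of $A$ to write \NT\ explicitly as $\bigcap_{n}\{2^{n}x_1+2\cdot 3^{n}x_2+5^{n}x_3\geq 0\}$, slice by $x_3=1$, compute the vertices $\vec{P}_n$ of the resulting planar convex set, and invoke structure theory (a planar semi-algebraic set has only finitely many non-smooth boundary points). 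Your computations check out: $\vec{P}_n=\bigl(2(5/2)^n,-\tfrac32(5/3)^n\bigr)$, $g_m(\vec{P}_n)=5^n\psi(m-n)$, and $\psi(k)=2^{k+1}-3^{k+1}+5^k\geq 0$ with equality exactly at $k\in\{0,1\}$; this inequality plays for you the role that Lemma \ref{lem:35} plays in the paper. What your route buys is a sharper geometric picture (the slice of \NT\ is an ``infinite polygon'' whose corners run off to infinity, whereas the paper's witnesses accumulate at the origin); what the paper's route buys is logical economy, since it needs nothing beyond the normal form of semi-algebraic sets and pigeonhole, and no cell decomposition.

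One step in your argument needs more care. From ``exactly the constraints $m=n$ and $m=n+1$ are tight at $\vec{P}_n$, with independent normals'' you conclude that $\vec{P}_n$ is a corner of $\partial K$. That inference is immediate for finitely many constraints but not for infinitely many: inactive constraints could in principle cut into every neighborhood of $\vec{P}_n$, smoothing the corner away. Here this does not happen, but it requires an estimate you did not state: $g_m(\vec{P}_n)=5^n\psi(m-n)$ grows like $5^m$ as $m\to\infty$, while the Lipschitz constant of $g_m$ grows only like $3^m$, so a single $\epsilon>0$ keeps all constraints with $m\notin\{n,n+1\}$ strictly positive on the ball of radius $\epsilon$ about $\vec{P}_n$. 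Only then is $K$ locally the wedge $\{g_n\geq 0,\ g_{n+1}\geq 0\}$, making $\vec{P}_n$ a genuine non-smooth point. With that supplement, and a proper citation of the fact that the non-manifold points of a one-dimensional semi-algebraic set form a finite set, your proof is complete.
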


The conclusion can be proved by using the following lemmas. For simplicity, \NT({\tt P3}) is denoted by \NT\ in this section.

\begin{lemma}\label{lem:31} 
Denote by $\tau$ the following set
\[\{9(x_1^2+x_2^2)-x_3^2<0, x_3>0\},\]
then $\tau\subseteq {\rm \NT}.$
\end{lemma}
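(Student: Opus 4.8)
The plan is to translate membership in $\NT$ into an infinite family of scalar inequalities and then dominate the contributions of the two smaller eigenvalues by the largest one. Since the update matrix is $A=\mathrm{diag}(2,3,5)$ and the guard is $B\vec{x}\ge 0$ with $B=(1,2,1)$, a point $\vec{x}=(x_1,x_2,x_3)$ lies in $\NT$ precisely when $BA^n\vec{x}\ge 0$ for every $n\in\N$, that is
\[
f(n):=2^n x_1+2\cdot 3^n x_2+5^n x_3\ge 0\qquad\text{for all }n\in\N .
\]
So it suffices to fix an arbitrary $\vec{x}$ with $9(x_1^2+x_2^2)<x_3^2$ and $x_3>0$ and show that $f(n)\ge 0$ for all $n$.

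First I would isolate the dominant term $5^n x_3$ and bound the remaining part from below. By the Cauchy--Schwarz inequality,
\[
2^n x_1+2\cdot 3^n x_2\ \ge\ -\sqrt{4^n+4\cdot 9^n}\,\sqrt{x_1^2+x_2^2} .
\]
The hypothesis $9(x_1^2+x_2^2)<x_3^2$ together with $x_3>0$ gives $\sqrt{x_1^2+x_2^2}<x_3/3$, hence
\[
f(n)\ >\ x_3\Bigl(5^n-\tfrac{1}{3}\sqrt{4^n+4\cdot 9^n}\Bigr).
\]
When $x_1=x_2=0$ the claim is immediate, since then $f(n)=5^n x_3>0$, so I may assume the displayed strict bound is meaningful.

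The remaining step reduces to the purely numerical inequality $5^n\ge \tfrac13\sqrt{4^n+4\cdot9^n}$, equivalently $9\cdot 25^n\ge 4^n+4\cdot 9^n$. This I would verify by the crude estimate $4^n+4\cdot 9^n\le 5\cdot 9^n\le 9\cdot 25^n$ for $n\ge 1$, and by direct check ($9\ge 5$) at $n=0$. Because $x_3>0$, this yields $f(n)>0$ for every $n\in\N$, so the guard never fails and $\vec{x}\in\NT$; since $\vec{x}$ was an arbitrary element of $\tau$, we conclude $\tau\subseteq\NT$.

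There is essentially no deep obstacle here: the only point requiring mild care is keeping the inequalities valid at the boundary index $n=0$, where the Cauchy--Schwarz estimate is tightest relative to $5^n$, and tracking strictness so that the guard $\ge 0$ is genuinely satisfied. An alternative to Cauchy--Schwarz is to bound $|x_1|<x_3/3$ and $|x_2|<x_3/3$ separately and sum, which avoids the square root at the cost of a slightly less tight scalar inequality (namely $2^n+2\cdot 3^n\le 3\cdot 5^n$, with equality at $n=0$ rescued by the strict bounds on $|x_1|,|x_2|$); either route closes the argument.
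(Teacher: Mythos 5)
Your proof is correct, but it takes a genuinely different route from the paper's. The paper argues by forward invariance of $\tau$: it checks (i) that every point of $\tau$ satisfies the guard, since $3|x_1|<x_3$ and $3|x_2|<x_3$ give $x_1+2x_2+x_3>0$, and (ii) that $A$ maps $\tau$ into itself, since $A(x_1,x_2,x_3)^{T}=(2x_1,3x_2,5x_3)^{T}$ and $9(4x_1^2+9x_2^2)\le 81(x_1^2+x_2^2)<9x_3^2<25x_3^2$ with $5x_3>0$; non-termination then follows by induction, without ever computing $BA^n\vec{x}$. You instead unroll the loop explicitly, writing $BA^n\vec{x}=2^n x_1+2\cdot 3^n x_2+5^n x_3$, and prove a lower bound uniform in $n$ via Cauchy--Schwarz together with the scalar inequality $9\cdot 25^n\ge 4^n+4\cdot 9^n$. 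Both arguments are sound. The invariance argument is shorter, reduces everything to a single one-step check, and is the standard ``recurrent set'' pattern for certifying non-termination; your direct estimate yields a quantitative strict margin $BA^n\vec{x}>0$ and does not require one to notice that $\tau$ is closed under $A$, at the price of handling an $n$-indexed family of inequalities (easy here because the eigenvalue $5$ dominates $2$ and $3$). One small remark: your separate treatment of the case $x_1=x_2=0$ is unnecessary, since the strict hypothesis $\sqrt{x_1^2+x_2^2}<x_3/3$ already makes your displayed bound valid, and strict, in that case as well.
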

\begin{proof}
For any $(x_1,x_2,x_3)\in\tau$, we have
$x_3>3|x_1|, x_3>3|x_2|$ and thus $x_1+2x_2+x_3>0.$ Because $A(x_1,x_2,x_3)^{T}=(2x_1,3x_2,5x_3)^{T}$ and $9(4x_1^2+9x_2^2)-25x_3^2< 0$, $A(x_1,x_2,x_3)^{T}\in \tau$. Therefore $\tau \subseteq {\rm \NT}$.
\end{proof}

\begin{lemma}\label{lem:32}$\partial \NT \subseteq \NT.$
\end{lemma}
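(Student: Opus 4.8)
The plan is to recognize that the desired inclusion $\partial \NT \subseteq \NT$ is exactly the assertion that $\NT$ is a closed subset of $\R^3$, and then to exhibit $\NT$ as an intersection of closed half-spaces. Recall that a set contains its boundary if and only if it is closed; so it suffices to prove that $\NT$ is closed, after which $\partial\NT\subseteq\overline{\NT}=\NT$ is immediate.

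First I would write down $\NT$ explicitly. Since the loop body applies the linear map $A=\mathrm{diag}(2,3,5)$ and the guard is $B\vec{x}\geq 0$ with $B=(1,2,1)$, a point $\vec{x}$ lies in $\NT$ precisely when the guard holds at every iterate, i.e. $BA^n\vec{x}\geq 0$ for all $n\in\N$. Here $A^n\vec{x}=(2^nx_1,3^nx_2,5^nx_3)^T$, so that $BA^n\vec{x}=2^nx_1+2\cdot 3^nx_2+5^nx_3$. Setting $H_n=\{\vec{x}\in\R^3\mid BA^n\vec{x}\geq 0\}$, each $H_n$ is the preimage of the closed ray $[0,\infty)$ under the continuous linear functional $\vec{x}\mapsto BA^n\vec{x}$, hence a closed half-space, and $\NT=\bigcap_{n=0}^{\infty}H_n$.

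The conclusion then follows at once: an arbitrary intersection of closed sets is closed, so $\NT$ is closed. I do not expect a genuine obstacle here. The one point that must be handled correctly is that the loop guard in {\tt P3} is the \emph{non-strict} inequality $x_1+2x_2+x_3\geq 0$; were it strict, the sets $H_n$ would be open half-spaces and the closedness argument would break. The infinite intersection is otherwise harmless, since closedness is preserved under arbitrary intersections, so no compactness, finiteness, or eigenvalue argument is needed for this lemma.
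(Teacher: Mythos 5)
Your proposal is correct and follows the same route as the paper: the paper's proof also observes that since the guard of {\tt P3} is the non-strict inequality $B\vec{x}\geq 0$, the set \NT\ is closed and therefore contains its boundary. You simply make explicit what the paper leaves implicit, namely that $\NT=\bigcap_{n\geq 0}\{\vec{x}\mid BA^n\vec{x}\geq 0\}$ is an intersection of closed half-spaces, which is a worthwhile level of detail but not a different argument.
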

\begin{proof}
Because the loop guard is of the form $B(x_1,x_2,x_3)^T\geq 0$, \NT\ is a closed set. So the conclusion is correct.
Furthermore, for any $(x_1,x_2,x_3)\in \partial \NT, x_1+2x_2+x_3\geq 0.$
\end{proof}

\begin{lemma}\label{lem:33}
If $(x_1,x_2,x_3)\in \NT$ and $A(x_1,x_2,x_3)^T\in \partial \NT $, then $(x_1,x_2,x_3)\in \partial \NT.$

 \end{lemma}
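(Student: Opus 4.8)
The plan is to exploit the fact that the update matrix $A=\mathrm{diag}(2,3,5)$ is invertible (its eigenvalues $2,3,5$ are all nonzero), so that $A$ is a linear homeomorphism of $\R^3$ onto itself and, in particular, an \emph{open} map. The whole argument will hinge on a single structural observation about $\NT$ together with this openness, and I would run it as a contrapositive/contradiction argument rather than tracking boundaries directly.

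First I would record the defining characterization $\NT=\{\vec{x}\in\R^3 \mid BA^n\vec{x}\geq 0 \text{ for all } n\in\N\}$, where $B=(1,2,1)$. From this I would prove the key inclusion $A(\NT)\subseteq\NT$: if $\vec{x}\in\NT$, then for every $n\in\N$ we have $BA^n(A\vec{x})=BA^{n+1}\vec{x}\geq 0$, so $A\vec{x}\in\NT$ as well. Intuitively, taking one loop step from a non-terminating input leaves us in a non-terminating input.

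The second ingredient is that $\NT$ is closed, which is exactly Lemma~\ref{lem:32}. Now suppose, for contradiction, that $(x_1,x_2,x_3)\in\NT$ but $(x_1,x_2,x_3)\notin\partial\NT$. Since $\NT$ is closed and the point lies in $\NT$, it must lie in the interior of $\NT$, so there is an open neighborhood $U$ with $(x_1,x_2,x_3)\in U\subseteq\NT$. Because $A$ is an open map, $A(U)$ is an open neighborhood of $A(x_1,x_2,x_3)^T$, and by the inclusion above $A(U)\subseteq A(\NT)\subseteq\NT$. Hence $A(x_1,x_2,x_3)^T$ is an interior point of $\NT$, contradicting the hypothesis $A(x_1,x_2,x_3)^T\in\partial\NT$. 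Therefore $(x_1,x_2,x_3)\in\partial\NT$.

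I expect no serious obstacle here: the two facts being combined, namely $A(\NT)\subseteq\NT$ and the openness of the invertible linear map $A$, are each elementary. The one point requiring care is to invoke openness of $A$ (equivalently, that $A^{-1}$ is continuous) rather than mere continuity, since continuity alone would not send an interior neighborhood to an open set around the image; this is why the invertibility of $A$ is essential and must be stated explicitly.
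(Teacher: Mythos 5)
Your proof is correct, and while it shares the paper's skeleton—a contradiction argument combining forward invariance with invertibility of $A$—the mechanism differs in a way worth noting. Both proofs use the two facts you isolate: $A(\NT)\subseteq\NT$ (immediate from $BA^{n}(A\vec{x})=BA^{n+1}\vec{x}$) and that $A$ is nonsingular. The paper, however, argues in the pull-back direction: assuming a ball $o(\vec{x},r)\subseteq\NT$, it picks a point $\vec{x'}\notin\NT$ within distance $r$ of $A\vec{x}$ and pulls it back into the ball via the metric inequality $|A^{-1}\vec{x'}-\vec{x}|<|\vec{x'}-A\vec{x}|<r$; that inequality is valid only because $A^{-1}=\mathrm{diag}(\frac{1}{2},\frac{1}{3},\frac{1}{5})$ is a contraction, i.e., it silently exploits that every eigenvalue of $A$ exceeds $1$. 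Your argument goes in the push-forward direction: the interior neighborhood $U$ of $\vec{x}$ maps to the open set $A(U)\subseteq A(\NT)\subseteq\NT$ containing $A\vec{x}$, so $A\vec{x}$ is interior, contradicting $A\vec{x}\in\partial\NT$. This needs only that $A$ is an open map, so your proof works verbatim for any invertible update matrix, whereas the paper's written estimate would need repair (shrink the radius by the operator norm of $A^{-1}$) if some eigenvalue had modulus below $1$; in that sense yours is the more general and more purely topological version of the same idea. Two small remarks: your appeal to Lemma \ref{lem:32} (closedness of $\NT$) is not actually needed, since any point of a set that is not a boundary point of that set is automatically an interior point; and your emphasis on openness of $A$ rather than mere continuity is exactly the right point of care—the paper handles the same issue, but buries it inside the contraction estimate.
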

\begin{proof} 
Let $\vec{x}=(x_1,x_2,x_3)$.
If the conclusion is not true, there exists a  ball $o(\vec{x},r)\subseteq \NT$. Because $A\vec{x}^T\in \partial \NT$, there exists $\vec{x'}$ such that $|A\vec{x}-\vec{x'}|<r$ and $\vec{x'}$ is not in \NT.

Since $|A^{-1}\vec{x'}-\vec{x}|<|\vec{x'}-A\vec{x}|<r$, $A^{-1}\vec{x'} \in o(\vec{x},r)$.
So $A^{-1}\vec{x'} \in \NT$ and thus $\vec{x'} \in \NT$, which is a contradiction.
\end{proof}

\begin{lemma}\label{lem:34} 
$\{( \frac{1}{2^n}, -\frac{1}{3^n}, \frac{1}{5^n} )\}_{n=0}^{\infty} \subseteq \partial \NT. $
\end{lemma}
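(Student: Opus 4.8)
The plan is to reduce membership in \NT\ to one scalar inequality, and then to climb the backward orbit of the point $(1,-1,1)$ using Lemma \ref{lem:33}. Write $\vec{P}_n=(\tfrac{1}{2^n},-\tfrac{1}{3^n},\tfrac{1}{5^n})$. Since $A=\mathrm{diag}(2,3,5)$ and $B=(1,2,1)$, a point $\vec{x}$ lies in \NT\ precisely when $BA^k\vec{x}\ge0$ for every $k\in\N$. A one-line computation gives $A^k\vec{P}_n=(2^{k-n},-3^{k-n},5^{k-n})$, so
\[
BA^k\vec{P}_n=f(k-n),\qquad f(j):=2^{j}-2\cdot3^{j}+5^{j}.
\]
Hence $\vec{P}_n\in\NT$ iff $f(j)\ge0$ for all integers $j\ge-n$, and the key structural fact is that $BA^n\vec{P}_n=f(0)=0$: the $n$-th iterate of $\vec{P}_n$ lands exactly on the guard hyperplane $B\vec{x}=0$.

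First I would establish the numerical claim that $f(j)\ge0$ for every integer $j$, with equality only at $j=0$; this at once gives $\vec{P}_n\in\NT$ for all $n$. For $j\ge0$ it follows from AM--GM: $2^{j}+5^{j}\ge2\sqrt{2^{j}5^{j}}=2(\sqrt{10})^{\,j}\ge2\cdot3^{j}$ because $\sqrt{10}>3$, strictly once $j\ge1$. The delicate range is $j=-m$ with $m\ge1$, where this estimate is useless; here I would multiply through by $3^{m}$ to rewrite $f(-m)>0$ as $(3/2)^{m}+(3/5)^{m}>2$. For $m=1$ this is just $\tfrac{21}{10}>2$, and for $m\ge2$ it follows from $(3/2)^{m}\ge(3/2)^{2}=\tfrac{9}{4}>2$ together with $(3/5)^{m}>0$. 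I expect this negative-exponent case to be the main obstacle, precisely because the continuous function $f$ does become negative for small negative real arguments (indeed $f'(0)=\ln(10/9)>0$), so the argument must genuinely use that $j$ is an integer rather than any convexity statement over the reals.

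It then remains to upgrade $\vec{P}_n\in\NT$ to $\vec{P}_n\in\partial\NT$, which I would do by induction on $n$ using the relation $A\vec{P}_n=\vec{P}_{n-1}$ (immediate from $A=\mathrm{diag}(2,3,5)$). For the base case $n=0$, we have $B\vec{P}_0=f(0)=0$, so $\vec{P}_0$ lies on the guard hyperplane; picking any direction $\vec{v}$ with $B\vec{v}<0$, every point $\vec{P}_0+\varepsilon\vec{v}$ with $\varepsilon>0$ violates the guard already at step $0$ and hence lies outside \NT, while $\vec{P}_0+\varepsilon\vec{v}\to\vec{P}_0$ and $\vec{P}_0\in\NT$; thus $\vec{P}_0\in\partial\NT$. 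For the inductive step, assume $\vec{P}_{n-1}\in\partial\NT$. Since $\vec{P}_n\in\NT$ and $A\vec{P}_n=\vec{P}_{n-1}\in\partial\NT$, Lemma \ref{lem:33} gives $\vec{P}_n\in\partial\NT$, completing the induction and the proof.
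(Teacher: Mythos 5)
Your proof is correct, and its skeleton is the same as the paper's: induction on $n$ along the backward orbit, with the relation $A\vec{P}_n=\vec{P}_{n-1}$ feeding Lemma \ref{lem:33} at each step, and the base case settled at the guard hyperplane. The substantive difference is that you verify \emph{both} hypotheses of Lemma \ref{lem:33}, whereas the paper's inductive step only records $A\vec{p}_n=\vec{p}_{n-1}\in\partial \NT\subseteq \NT$ and never checks that $\vec{p}_n$ itself lies in $\NT$, which the lemma also requires. That missing hypothesis reduces exactly to your scalar claim at negative integers: the iterates $k\geq 1$ come for free from $A\vec{P}_n\in\NT$, so what remains is $B\vec{P}_n=f(-n)\geq 0$, i.e. $(3/2)^n+(3/5)^n\geq 2$. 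Hence your treatment of the ``delicate range'' $j=-m$ is not an optional refinement; it is precisely the step that makes the induction sound, and your observation that $f$ dips below zero for small negative \emph{real} arguments (since $f(0)=0$ and $f'(0)=\ln(10/9)>0$) shows the inequality is genuinely integer-specific and cannot be dismissed by a convexity or monotonicity argument. The rest of your write-up (AM--GM for $j\geq 0$, the perturbation in a direction $\vec{v}$ with $B\vec{v}<0$ for the base case) matches the paper's intent, just spelled out explicitly.
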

\begin{proof}
Let $\vec{p}_n=( \frac{1}{2^n}, -\frac{1}{3^n}, \frac{1}{5^n} ), n\ge 0.$
We use induction on the value of $n$.

When $n=0$, because $B\vec{p}_0=B(1,-1,1)^T=0$ and
\[ BA^k\vec{p}_0=2^k-2\times 3^k+5^k> 0\ ~~ {\rm for\ any}\ k\in \N^+,\]
we have $\vec{p}_0 \in \partial \NT.$	

Now assume that the conclusion holds for $n-1$. So, $A\vec{p}_n=\vec{p}_{n-1} \in \partial \NT \subseteq \NT.$ 
By Lemma \ref{lem:33},
$ \vec{p}_n \in  \partial \NT$.
\end{proof}

\begin{lemma}\label{lem:35}
For any non-zero polynomial $f(x_1,x_2,x_3)\in\R[x_1,x_2,x_3]$, there exists an $N$ such that $f(\frac{1}{2^n}, -\frac{1}{3^n}, \frac{1}{5^n})\neq 0$ for all $n>N$.
\end{lemma}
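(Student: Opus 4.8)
The plan is to substitute the point $(\tfrac{1}{2^n},-\tfrac{1}{3^n},\tfrac{1}{5^n})$ into $f$, read the result off as a finite exponential sum in $n$, and then isolate a single dominant term. Writing $f=\sum_{(i,j,k)} c_{ijk}\,x_1^i x_2^j x_3^k$ with only finitely many nonzero coefficients $c_{ijk}$, the substitution gives
\[
g(n):=f\Big(\tfrac{1}{2^n},-\tfrac{1}{3^n},\tfrac{1}{5^n}\Big)=\sum_{(i,j,k)} (-1)^j c_{ijk}\,\big(2^i 3^j 5^k\big)^{-n}.
\]
Setting $a_{ijk}=(-1)^j c_{ijk}$ and $r_{ijk}=2^i 3^j 5^k$, this is $g(n)=\sum a_{ijk}\,r_{ijk}^{-n}$, a finite sum of terms of the form (constant)$\times$(positive base)$^{-n}$.

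The key step is to observe that the bases $r_{ijk}$ are pairwise distinct. Indeed, by the fundamental theorem of arithmetic, distinct exponent triples $(i,j,k)$ over the primes $2,3,5$ yield distinct integers $2^i3^j5^k$, so no two monomials collapse into the same base. Hence each surviving term keeps a nonzero coefficient $a_{ijk}=(-1)^j c_{ijk}\neq 0$ whenever $c_{ijk}\neq 0$. Since $f$ is nonzero, the set of surviving $(i,j,k)$ is nonempty, so among the corresponding bases there is a strictly smallest one; call it $r_{\min}$, with coefficient $a_{\min}\neq 0$.

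Finally I would factor out the dominant term. Writing
\[
g(n)=r_{\min}^{-n}\Big(a_{\min}+\sum_{r_{ijk}>r_{\min}} a_{ijk}\,(r_{\min}/r_{ijk})^{n}\Big),
\]
each ratio satisfies $0<r_{\min}/r_{ijk}<1$, so every term in the sum tends to $0$ as $n\to\infty$; the parenthesised expression therefore converges to $a_{\min}\neq 0$. Consequently there is an $N$ such that the parenthesis is nonzero for all $n>N$, and since $r_{\min}^{-n}\neq 0$, it follows that $g(n)\neq 0$ for all $n>N$, which is exactly the claim.

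I expect the only genuine subtlety to be the distinctness of the bases $r_{ijk}$; once that is secured by unique factorization, the rest is a routine dominant-term (largest surviving $r^{-n}$ wins) limit argument requiring no delicate estimates. Note that the specific choice of \emph{three distinct primes} $2,3,5$ is what makes this work, and the same reasoning would fail for a point whose coordinates are powers of multiplicatively dependent numbers.
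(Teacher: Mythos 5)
Your proof is correct and uses the same two key ingredients as the paper's: the bases $2^i3^j5^k$ are pairwise distinct by unique factorization, and the term with the slowest decay (largest $t_i$ in the paper, smallest $r_{ijk}$ in yours) dominates in the limit. The only difference is presentational --- you argue directly that the sum is eventually nonzero, while the paper assumes a subsequence of zeros and derives a contradiction by the same dominant-term limit.
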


\begin{proof}
 Assume that the conclusion does not hold. Then there exists a subsequence $\{((\frac{1}{2})^{n_k},-(\frac{1}{3})^{n_k},(\frac{1}{5})^{n_k})\}_{k=1}^{\infty}$ such that $f$ vanishes on each point of it.

Let $f=b_1x_1^{\alpha_1}x_2^{\beta_1}x_3^{\gamma_1}+...+b_sx_1^{\alpha_s}x_2^{\beta_s}x_3^{\gamma_s}$ where $b_i\in \mathbb{R} ,b_i\neq 0 ,\alpha_i\in \mathbb{N},\beta_i \in \mathbb{N} ,\gamma_i\in\mathbb{N},$ and $(\alpha_i,\beta_i,\gamma_i)\neq (\alpha_j,\beta_j,\gamma_j)$ for $i\neq j.$

Obviously $s\geq 1$ because $f\not\equiv 0$. Let $t_i=(\frac{1}{2})^{\alpha_i}(\frac{1}{3})^{\beta_i}(\frac{1}{5})^{\gamma_i}$.

It is an obvious fact that $2^{\alpha_j}3^{\beta_j}5^{\gamma_j}\neq 2^{\alpha_i}3^{\beta_i}5^{\gamma_i}$ for $i\neq j.$
Hence $t_1,t_2,...,t_s $ are pairwise distinct. Without loss of generality, let   $t_1>t_2>...>t_s.$

For every $ j>1,$ we have $\lim\limits_{k\to\infty}{(\frac{t_j } {t_1})^{n_k} }=0$.
Thus \[\lim\limits_{k\to \infty}{|\frac{f((\frac{1}{2})^{n_k},-(\frac{1}{3})^{n_k},(\frac{1}{5})^{n_k} ) }{((\frac{1}{2})^{\alpha_1 }(\frac{1}{3})^{\beta_1 }(\frac{1}{5})^{\gamma_1})^{n_k} }|=|b_1|}\neq 0 \enspace .\]

This contradicts with $f((\frac{1}{2})^{n_k},-(\frac{1}{3})^{n_k},(\frac{1}{5})^{n_k})=0$. Therefore the conclusion follows.
\end{proof}

Using the above lemmas, we can now prove Theorem \ref{th:1}.

\begin{proof}
Denote by $S$ the sequence  $\{ (\frac{1}{2})^n,-(\frac{1}{3})^n,(\frac{1}{5})^n)\}$. By Lemma \ref{lem:34}, $S\subseteq \partial \NT.$

     Assume $\NT$ is a semi-algebraic set. Then there exist finite many polynomials $f_{i,j}\in \mathbb{R}[x_1,x_2,x_3]$ and $\triangleleft_{i,j}\in \{<, =\}$ for $i=1,...,s$ and $j=1,...,r_i$
such that
\begin{equation}
\NT=\bigcup \limits _{i=1}^s \bigcap \limits _{j=1}^{r_i} \{(x_1,x_2,x_3)\in\mathbb{R}^{3}|f_{i,j} \triangleleft_{i,j}  0\}.
\end{equation}

Because $S\subseteq \partial \NT \subseteq \{f_{i,j}=0\}_{i,j}$,   for any $x\in S$, there exists a polynomial $f_{i,j}$ such that $f_{i,j}(x)=0$.
By pigeonhole principle there exists an $f_{i,j}$ and a subsequence $S_1$ of $S$ such that  $f_{i,j}$ vanishes on $S_1$, which contradicts with Lemma \ref{lem:35}.
\end{proof}

\section{Conclusion}\label{sec:con}

In this paper, we consider whether the \NT\ of a simple linear loop is decidable and how to compute it if it is decidable. For homogeneous linear loops with only two program variables, we give a complete algorithm for computing the \NT. For the case of more program variables, we show that the \NT\ cannot be described by Tarski formulae in general.


\section*{Acknowledgements}
The work is partly supported by NNSFC 91018012 and the EXACTA project from ANR and NSFC.

\end{document}